\let\oldvec\vec 
\documentclass{llncs}
\let\vec\oldvec 

\usepackage{geometry}
\geometry{
  a4paper,         
  textwidth=12.2cm,  
  textheight=19.3cm, 
  heightrounded,   
  hratio=1:1,      
  vratio=2:3,      
}

\usepackage{times}

\usepackage{comment}

\usepackage{amsmath,amssymb}
\usepackage{color}
\usepackage{mathtools}
\usepackage{tikz}
\usepackage{url}
\usepackage{wrapfig}
\usepackage{enumerate}

\usepackage{listings}

\usepackage{booktabs}

\pgfdeclarelayer{background}
\pgfdeclarelayer{foreground}
\pgfsetlayers{background,main,foreground}

\usetikzlibrary{positioning,chains,fit,shapes,calc}

\newcommand{\N}{\mathbb{N}}

\newcommand{\R}{\mathbb{R}}

\usepackage{xspace}
\newcommand{\perturbed}{deviated\xspace}
\newcommand{\deviation}{deviation\xspace}
\newcommand{\stability}{stability\xspace}

\newcommand{\eSR}{\text{$\epsilon$-SR}}
\newcommand{\bDR}{\text{$\beta$-DR}}

\definecolor{myblue}{RGB}{80,80,160}
\definecolor{mygreen}{RGB}{80,160,80}

\pagestyle{plain}

\title{Path Deviations Outperform Approximate Stability in Heterogeneous Congestion Games}


\author{Pieter Kleer\inst{1} \and Guido Sch\"afer\inst{1,2}}
\authorrunning{P. Kleer \and G.\ Sch\"afer} 

\institute{%
  Centrum Wiskunde \& Informatica (CWI), Networks and Optimization Group, Amsterdam, The Netherlands
  \and
  Vrije Universiteit Amsterdam, Department of Econometrics and Operations Research, Amsterdam, The Netherlands. \\ 
  \texttt{kleer@cwi.nl, schaefer@cwi.nl}
}

\usepackage{tikz}
\usetikzlibrary{arrows,automata,positioning}
\usepackage[latin1]{inputenc}

\usepackage{ifthen}
\newenvironment{rtheorem}[3][]{%
\noindent\ifthenelse{\equal{#1}{}}{\bf #2 #3.}{\bf #2 #3 (#1)}%
\begin{it}}{\end{it}}

\begin{document}
\sloppy

\maketitle

\begin{abstract}
We consider non-atomic network congestion games with heterogeneous players where the latencies of the paths are subject to some bounded deviations. This model encompasses several well-studied extensions of the classical Wardrop model which incorporate, for example, risk-aversion, altruism or travel time delays. Our main goal is to analyze the worst-case deterioration in social cost of a \emph{\perturbed Nash flow} (i.e., for the perturbed latencies) with respect to an original Nash flow. 

We show that for homogeneous players \perturbed Nash flows coincide with approximate Nash flows and derive tight bounds on their inefficiency. In contrast, we show that for heterogeneous populations this equivalence does not hold. We derive tight bounds on the inefficiency of both \perturbed and approximate Nash flows for \emph{arbitrary} player sensitivity distributions.
Intuitively, our results suggest that the negative impact of path deviations (e.g., caused by risk-averse behavior or latency perturbations) is less severe than approximate \stability (e.g., caused by limited responsiveness or bounded rationality). 

We also obtain a tight bound on the inefficiency of \perturbed Nash flows for matroid congestion games and homogeneous populations if the path deviations can be decomposed into edge deviations. 
In particular, this provides a tight bound on the Price of Risk-Aversion for matroid congestion games.
\end{abstract}

\section{Introduction}

In 1952, Wardrop~\cite{Wardrop1952} introduced a simple model, also known as the \emph{Wardrop model}, to study outcomes of selfish route choices in traffic networks which are affected by congestion. In this model, there is a continuum of non-atomic players, each controlling an infinitesimally small amount of flow, whose goal is to choose paths in a given network to minimize their own travel times. The latency (or delay) of each edge  is prescribed by a non-negative, non-decreasing latency function which depends on the total flow on that edge.
Ever since its introduction, the Wardrop model has been used extensively, both in operations research and traffic engineering studies, to investigate various aspects of selfish routing in networks. 

More recently, the classical Wardrop model has been extended in various ways to capture more complex player behaviors. Examples include the incorporation of uncertainty attitudes (e.g., risk-aversion, risk-seeking), cost alterations (e.g., latency perturbations, road pricing), other-regarding dispositions (e.g., altruism, spite) and player biases (e.g., responsiveness, bounded rationality).

Several of these extensions can be viewed as defining some modified cost for each path which combines the original latency with some `deviation' (or perturbation) along that path. Such deviations are said to be \emph{$\beta$-bounded} if the total deviation along each path is at most $\beta$ times the latency of that path. The player objective then becomes to minimize the combined cost of latency and  deviation along a path (possibly using different norms). 
An equilibrium outcome corresponds to a \emph{$\beta$-\perturbed Nash flow}, i.e., a Nash flow with respect to the combined cost. The deviations might be given explicitly (e.g., as in the altruism model of Chen et al.~\cite{Chen2014}) or be defined implicitly (e.g., as in the risk-aversion model of Nikolova and Stier-Moses~\cite{Nikolova2015}). Further, different fractions of players might perceive these deviations differently, i.e., players might be heterogeneous with respect to the deviations. 

Another extension, which is closely related to the one above, is to incorporate different degrees of `responsiveness' of the players. For example, each player might be willing to deviate to an alternative route only if her latency decreases by at least a certain fraction. In this context, an equilibrium outcome corresponds to an \emph{$\epsilon$-approximate Nash flow} for some $\epsilon \ge 0$, i.e., for each player the latency is at most $(1+\epsilon)$ times the latency of any other path. Here, $\epsilon$ is a parameter which reflects the responsiveness of the players. An analogue definition can be given for populations with heterogeneous responsiveness parameters. 

To illustrate the relation between \perturbed and approximate Nash flows, suppose we are given a $\beta$-\perturbed Nash flow $f$ for some $\beta \geq 0$, where the latency $\ell_P(f)$ of each path $P$ is perturbed by an arbitrary $\beta$-bounded deviation $\delta_P(f)$ satisfying $0 \leq \delta_P(f) \leq \beta l_P(f)$. Intuitively, the deviations inflate the latency on each path by at most a factor of $(1+\beta)$. Further, assume that the population is homogeneous. From the Nash flow conditions (see Section~\ref{sec:pre} for formal definitions), it follows trivially that $f$ is also an $\epsilon$-approximate Nash flow with $\epsilon = \beta$. 
But does the converse also hold? That is, can every $\epsilon$-approximate Nash flow be induced by a set of bounded path deviations? 
More generally, what about the relation between \perturbed and approximate Nash flows for heterogenous populations? Can we bound the inefficiency of these flows? 

In this paper, we answer these questions by investigating the relation between the two equilibrium notions. Our main goal is to quantify the inefficiency of \perturbed and approximate Nash flows, both for homogeneous and heterogeneous populations. To this aim, we study the (relative) worst-case deterioration in social cost of a $\beta$-\perturbed Nash flow with respect to an original (unaltered) Nash flow; we use the term \emph{$\beta$-\deviation ratio} to refer to this ratio. This ratio has recently been studied in the context of risk aversion \cite{Lianeas2016,Nikolova2015} and in the more general context of bounded path deviations \cite{Kleer2016}. 
Similarly, for approximate Nash flows we are interested in bounding the \emph{$\epsilon$-\stability ratio}, i.e., the worst-case deterioration in social cost of an $\epsilon$-approximate Nash flow with respect to an original Nash flow. 

Note that these notions differ from the classical \emph{price of anarchy} notion~\cite{Koutsoupias1999}, which refers to the worst-case deterioration in social cost of a $\beta$-\perturbed (respectively, $\varepsilon$-approximate) Nash flow with respect to an \emph{optimal} flow. While the price of anarchy typically depends on the class of latency functions (see, e.g., \cite{Chen2014,Christodoulou2011,Kleer2016,Nikolova2015} for results in this context), the \deviation ratio is independent of the latency functions but depends on the topology of the network (see \cite{Kleer2016,Nikolova2015}). 


\paragraph{Our contributions.}

The main contributions of this paper are as follows: 

\begin{enumerate}\itemsep7pt

\item We show that for homogeneous populations the set of $\beta$-\perturbed Nash flows coincides with the set of $\epsilon$-approximate Nash flows for $\beta = \epsilon$. Further, we derive an upper bound on the $\epsilon$-\stability ratio (and thus also on the $\epsilon$-\deviation ratio) which is at most $(1+\epsilon)/(1-\epsilon n)$, where $n$ is the number of nodes, for single-commodity networks. We also prove that the upper bound we obtain is tight for \emph{generalized Braess graphs}. These results are presented in Section~\ref{sec:approx}. 

\item We prove that for heterogenous populations the above equivalence does not hold. We derive tight bounds for both the $\beta$-\deviation ratio and the $\epsilon$-\stability ratio for single-commodity instances on series-parallel graphs and arbitrary sensitivity distributions of the players. To the best of our knowledge, these are the first inefficiency results in the context of heterogenous populations which are tight for \emph{arbitrary} sensitivity distributions. 
Our bounds show that both ratios depend on the demands and sensitivity distribution $\gamma$ of the heterogenous players (besides the respective parameters $\beta$ and $\epsilon$). Further, it turns out that the $\beta$-\deviation ratio is always at most the $\epsilon$-\stability ratio for $\epsilon = \beta \gamma$.
These results are given in Section~\ref{sec:het}. 

\item We also derive a tight bound on the $\beta$-\deviation ratio for single-commodity matroid congestion games and homogeneous populations if the path deviations can be decomposed into edge deviations. To the best of our knowledge, this is the first result in this context which goes beyond network congestion games. In particular, this gives a tight bound on the Price of Risk-Aversion \cite{Nikolova2015} for matroid congestion games. 
This result is of independent interest and presented in Section~\ref{sec:approx}. 

\end{enumerate}

In a nutshell, our results reveal that for homogeneous populations there is no quantitative difference between the inefficiency of \perturbed and approximate Nash flows in the worst case. In contrast, for heterogenous populations the $\beta$-\deviation ratio is always at least as good as the $\epsilon$-\stability ratio with $\epsilon = \beta \gamma$. Intuitively, our results suggest that the negative impact of path deviations (e.g., caused by risk-averse behavior or latency perturbations) is less severe than approximate  \stability (e.g., caused by limited responsiveness or bounded rationality). 

\paragraph{Related work.} 

We give a brief overview of the works which are most related to our results. 
Christodoulou et al. \cite{Christodoulou2011} study the inefficiency of approximate equilibria in terms of the price of anarchy and price of stability (for homogeneous populations). Generalized Braess graphs were introduced by Roughgarden \cite{Roughgarden2006} and are used in many other lower bound constructions (see, e.g., \cite{Englert2008,Kleer2016,Roughgarden2006}). 
Chen et al. \cite{Chen2014} study an altruistic extension of the Wardrop model and, in particular, also consider heterogeneous altruistic populations. They obtain an upper bound on the ratio between an altruistic Nash flow and a social optimum for parallel graphs, which is tight for two sensitivity classes. It is mentioned that this bound is most likely not tight in general. Meir and Parkes \cite{Meir2014Arxiv} study player-specific cost functions in a smoothness framework \cite{Roughgarden2015}. Some of their inefficiency results are tight, although none of their bounds seems to be tight for arbitrary sensitivity distributions. Matroids have also received some attention in the Wardrop model. In particular, Fujishige et al. \cite{Fujishige2015} show that matroid congestion games are immune against the Braess paradox (and their analysis is tight in a certain sense). We refer the reader to \cite{Kleer2016} for additional references and relations of other models to the bounded path deviation model considered here.

\section{Preliminaries}\label{sec:pre}



Let $\mathcal{I} = (E,(l_e)_{e \in E},(\mathcal{S}_i)_{i \in [k]},(r_i)_{i \in [k]})$ be an instance of a non-atomic congestion game. Here, $E$ is the set of resources (or edges, or arcs) that are equipped with a non-negative, non-decreasing, continuous latency function $l_e: \mathbb{R}_{\ge 0} \rightarrow \mathbb{R}_{\ge 0}$. Each commodity $i \in [k]$ has a strategy set $\mathcal{S}_i \subseteq 2^E$ and demand $r_i \in \mathbb{R}_{>0}$. 
Note that in general the strategy set $\mathcal{S}_i$ of player $i$ is defined by arbitrary resource subsets. If each strategy $P \in \mathcal{S}_i$ corresponds to an $s_i, t_i$-path in a given directed graph, then the corresponding game is called a \emph{network} congestion game.\footnote{If a network congestion game with a single commodity is considered (i.e., $k = 1$), we omit the commodity index for ease of notation.}
We slightly abuse terminology and use the term \emph{path} also to refer to a strategy $P \in \mathcal{S}_i$ of player $i$ (which does not necessarily correspond to a path in a graph); no confusion shall arise.
We denote by $\mathcal{S} = \cup_i \mathcal{S}_i$ the set of all paths.

An outcome of the game is a (feasible) flow $f^i: \mathcal{S}_i \rightarrow \R_{\geq 0}$ satisfying $\sum_{P \in \mathcal{S}_i} f_P^i = r_i$ for every $i \in [k]$. We use $\mathcal{F}(\mathcal{S})$ to denote the set of all feasible flows $f = (f^1,\dots,f^k)$. 
Given a flow $f = (f^i)_{i \in [k]} \in \mathcal{F}(\mathcal{S})$, we use $f^i_e$ to denote the total flow on resource $e \in E$ of commodity $i \in [k]$, i.e., $f_e^i = \sum_{P \in \mathcal{S}_i : e \in P} f_P^i$. The total flow on edge $e \in E$ is defined as $f_e = \sum_{i \in [k]} f_e^i$. 

The latency of a path $P \in \mathcal{S}$ with respect to $f$ is defined as $l_P(f) := \sum_{e \in P} l_e(f_e)$. The cost of commodity $i$ with respect to $f$ is $C_i(f) =  \sum_{P \in \mathcal{S}_i} f_P l_P(f)$. The \textit{social cost} $C(f)$ of a flow $f$ is given by its total average latency, i.e., $C(f) = \sum_{i \in [k]} C_i(f) = \sum_{e \in E} f_e l_e(f_e)$. 
A flow that minimizes $C(\cdot)$ is called \textit{(socially) optimal}. 


If the population is heterogenous, then each commodity $i \in [k]$ is further partitioned in $h_i$ \emph{sensitivity classes}, where class $j \in [h_i]$ has demand $r_{ij}$ such that $r_i = \sum_{j \in [h_i]} r_{ij}$. 
Given a path $P \in \mathcal{S}_i$, we use $f_{P,j}$ to refer to the amount of flow on path $P$ of sensitivity class $j$ (so that $\sum_{j \in [h_i]} f_{P,j} = f_P$).

\paragraph{Deviated Nash flows.} 


We consider a \emph{bounded deviation model} similar to the one introduced in \cite{Kleer2016}.\footnote{In fact, in \cite{Kleer2016} more general path deviations are introduced; the path deviations considered here correspond to \emph{$(0, \beta)$-path deviations} in \cite{Kleer2016}.}
We use $\delta= (\delta_P)_{P \in \mathcal{S}}$ to denote some arbitrary path deviations, where $\delta_P : \mathcal{F}(\mathcal{S}) \rightarrow \R_{\geq 0}$ for all $P \in \mathcal{S}$.
Let $\beta \geq 0$ be fixed. 
Define the set of \emph{$\beta$-bounded path deviations} as
$
\Delta(\beta) = \{ (\delta_P)_{P \in \mathcal{S}} \ | \ 0 \leq \delta_P(f) \leq \beta l_P(f) \text{ for all } f \in \mathcal{F}(\mathcal{S}) \}.
$

Every commodity $i \in [k]$ and sensitivity class $j \in [h_i]$ has a non-negative sensitivity $\gamma_{ij}$ with respect to the path deviations. The population is \emph{homogeneous} if $\gamma_{ij} = \gamma$ for all $i \in [k]$, $j \in [h_i]$ and some $\gamma \ge 0$; otherwise, it is \emph{heterogeneous}. 
Define the \emph{deviated latency} of a path $P \in \mathcal{S}_i$ for sensitivity class $j \in [h_i]$ as $q_{P}^j(f) = l_P(f) + \gamma_{ij} \delta_P(f)$.


We say that a flow $f$ is a \emph{$\beta$-\perturbed Nash flow} if there exist some $\beta$-bounded path deviations $\delta \in \Delta(\beta)$ such that
\begin{equation}
\forall i \in [k], \forall j \in [h_i], \forall P \in \mathcal{S}_i, f_{P,j} > 0: \qquad q_{P}^j(f) \leq q_{P'}^j(f)  \ \ \forall P' \in \mathcal{S}_i.
\label{eq:nash}
\end{equation}
We define the \emph{$\beta$-\deviation ratio} $\bDR(\mathcal{I})$ as the maximum ratio $C(f^\beta)/C(f^0)$ of an $\beta$-\perturbed Nash flow $f^\beta$ and an original Nash flow $f^0$. 
Intuitively, the deviation ratio measures the worst-case deterioration in social cost as a result of (bounded) deviations in the path latencies.
Note that here the comparison is done with respect to an \emph{unaltered} Nash flow to measure the impact of these deviations.

The set $\Delta(\beta)$ can also be restricted to path deviations which are defined as a function of edge deviations along that path. Suppose every edge $e \in E$ has a deviation $\delta_e : \R_{\geq 0} \rightarrow \R_{\geq 0}$ satisfying $0 \leq \delta_e(x) \leq \beta l_e(x)$ for all $x \geq 0$. For example, feasible path deviations can then be defined by the $L_1$-norm objective $\delta_P(f) = \sum_{e \in P} \delta_e(x)$ (as in \cite{Kleer2016,Nikolova2015}) or the $L_2$-norm objective $\delta_P(f) = \sqrt{\sum_{e \in P} \delta_e(x)^2)}$ (as in \cite{Nikolova2015,Lianeas2016}).
The \emph{Price of Risk-Aversion} introduced by Nikolova and Stier-Moses \cite{Nikolova2015} is technically the same ratio as the \deviation ratio for the $L_1$- and $L_2$-norm (see \cite{Kleer2016} for details).

\paragraph{Approximate Nash flows.}


We introduce the notion of an approximate Nash flow. Also here, each commodity $i \in [k]$ and sensitivity class $j \in [h_i]$ has a non-negative sensitivity $\epsilon_{ij}$.
We say that the population is \emph{homogeneous} if $\epsilon_{ij} = \epsilon$ for all $i \in [k]$, $j \in [h_i]$ and some $\epsilon \ge 0$; otherwise, it is \emph{heterogeneous}.

A flow $f$ is an \emph{$\epsilon$-approximate Nash flow} with respect to sensitivities $\epsilon = (\epsilon_{ij})_{i \in [k], j \in [h_i]}$ if
\begin{equation}
\forall i \in [k], \ \forall j \in [h_i], \ \forall P \in \mathcal{S}_i, f_{P,j} > 0: \qquad l_{P}(f) \leq (1+\epsilon_{ij}) l_{P'}(f)  \ \ \forall P' \in \mathcal{S}_i
\label{def:epsnash}
\end{equation}
Note that a $0$-approximate Nash flow is simply a Nash flow. We define the \textit{$\epsilon$-\stability ratio} $\eSR(\mathcal{I})$ as the maximum ratio $C(f^\epsilon)/C(f^0)$ of an $\epsilon$-approximate Nash flow $f^\epsilon$ and an original Nash flow $f^0$.
 
Some of the proofs are missing in the main text below and can be found in the appendix.

\section{Heterogeneous populations}\label{sec:het}

We first elaborate on the relation between \perturbed and approximate Nash flows for general congestion games with heterogeneous populations. 


\begin{proposition}\label{prop:inclusion}
Let $\mathcal{I}$ be a congestion game with heterogeneous players. 
If $f$ is a $\beta$-\perturbed Nash flow for $\mathcal{I}$, then $f$ is an $\epsilon$-approximate Nash flow for $\mathcal{I}$ with $\epsilon_{ij} = \beta \gamma_{ij}$ for all $i \in [k]$ and $j \in [h_i]$ (for the same demand distribution $r$).
\end{proposition}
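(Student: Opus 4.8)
The plan is to simply unfold the two equilibrium definitions and chain a short sequence of inequalities. Since $f$ is a $\beta$-\perturbed Nash flow, by \eqref{eq:nash} there exist path deviations $\delta = (\delta_P)_{P \in \mathcal{S}} \in \Delta(\beta)$ such that for every commodity $i \in [k]$, every sensitivity class $j \in [h_i]$, and every $P \in \mathcal{S}_i$ with $f_{P,j} > 0$ we have $q_P^j(f) \le q_{P'}^j(f)$ for all $P' \in \mathcal{S}_i$, where $q_P^j(f) = l_P(f) + \gamma_{ij}\,\delta_P(f)$. The goal is to show that the same flow $f$ (with the same demand distribution $r$, since $f$ itself is untouched) satisfies \eqref{def:epsnash} with $\epsilon_{ij} = \beta\gamma_{ij}$.

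Fix such $i$, $j$, $P$ with $f_{P,j} > 0$, and an arbitrary $P' \in \mathcal{S}_i$. First, since $\delta_P(f) \ge 0$ and $\gamma_{ij} \ge 0$, we have $l_P(f) \le l_P(f) + \gamma_{ij}\,\delta_P(f) = q_P^j(f)$. Next, by the Nash condition for the deviated latencies, $q_P^j(f) \le q_{P'}^j(f) = l_{P'}(f) + \gamma_{ij}\,\delta_{P'}(f)$. Finally, since $\delta \in \Delta(\beta)$ gives $\delta_{P'}(f) \le \beta\, l_{P'}(f)$, we get $l_{P'}(f) + \gamma_{ij}\,\delta_{P'}(f) \le l_{P'}(f) + \beta\gamma_{ij}\, l_{P'}(f) = (1+\beta\gamma_{ij})\,l_{P'}(f)$. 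Concatenating these three steps yields $l_P(f) \le (1+\beta\gamma_{ij})\,l_{P'}(f)$, which is exactly the defining inequality \eqref{def:epsnash} of an $\epsilon$-approximate Nash flow with $\epsilon_{ij} = \beta\gamma_{ij}$.

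I do not expect any genuine obstacle here: the argument is a one-line sandwich estimate, and the per-class structure is handled automatically because the same witnessing deviations $\delta$ are used for all classes, so the resulting approximation parameter is the class-specific $\beta\gamma_{ij}$ rather than a uniform one. The only point worth stating explicitly is that the non-negativity of the deviations is what lets us drop $\gamma_{ij}\delta_P(f)$ on the left, while $\beta$-boundedness is what lets us absorb $\gamma_{ij}\delta_{P'}(f)$ into the factor $(1+\beta\gamma_{ij})$ on the right; both are immediate from the definition of $\Delta(\beta)$.
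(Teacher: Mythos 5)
Your proposal is correct and follows exactly the same three-step chain as the paper's own proof: drop the non-negative deviation term on the left, apply the \perturbed Nash condition, then absorb the deviation on the right via the $\beta$-bound, yielding $l_P(f) \le (1+\beta\gamma_{ij})\,l_{P'}(f)$. No differences worth noting.
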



\paragraph{Discrete sensitivity distributions.}

Subsequently, we show that the reverse of Proposition \ref{prop:inclusion} does not hold. 
We do this by providing tight bounds on the $\beta$-\deviation ratio and the $\epsilon$-stability ratio for instances on (single-commodity) series-parallel graphs and arbitrary discrete sensitivity distributions.


\begin{theorem}\label{thm:hetero}
Let $\mathcal{I}$ be a single-commodity network congestion game on a series-parallel graph with heterogeneous players, demand distribution $r = (r_i)_{i \in [h]}$ normalized to $1$, i.e.,  $\sum_{j \in [h]} r_i = 1$, and sensitivity distribution $\gamma = (\gamma_i)_{i \in [h]}$, with $\gamma_1 < \gamma_2 < \dots < \gamma_h$. 
Let $\beta \ge 0$ be fixed and define $\epsilon = (\beta \gamma_i)_{i \in [h]}$. Then the $\epsilon$-stability ratio and the $\beta$-\deviation ratio are bounded by: 
\begin{align}
\eSR(\mathcal{I})  \le 1 + \beta \sum_{j = 1}^h r_j\gamma_j \quad\text{and}\quad
\bDR(\mathcal{I})  \le 1 + \beta \cdot \max_{j \in [h]} \bigg\{ \gamma_j \bigg( \sum_{p = j}^h r_p \bigg) \bigg\}.
\label{eq:comparison_epsilon}
\end{align}
Further, both bounds are tight for all distributions $r$ and $\gamma$.
\end{theorem}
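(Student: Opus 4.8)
The plan is to treat the two upper bounds and the tightness statement separately, normalising throughout by $C(f^0)=L^0$, where $L^0$ is the common latency of the flow-carrying paths of the original Nash flow $f^0$; this identity holds because the total demand equals $1$. The one topological ingredient needed is a \emph{comparison lemma}: on a series-parallel graph, for any feasible flow $g$ of the same demand there is an $s,t$-path $P$ with $l_P(g)\le L^0$, equivalently $\min_P l_P(g)\le L^0$. (This fails on Braess-type graphs, which is precisely why series-parallel graphs are needed.) I would prove it by induction on the series-parallel decomposition: it is trivial for a single edge, and it propagates through series and parallel compositions, using that the Nash latency of a series-parallel graph is monotone in the demand.

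For the $\epsilon$-stability bound, let $f^\epsilon$ be an $\epsilon$-approximate Nash flow with $\epsilon_j=\beta\gamma_j$ and set $L:=\min_P l_P(f^\epsilon)$. By \eqref{def:epsnash}, every path carrying flow of class $j$ has latency at most $(1+\epsilon_j)L$, so $C(f^\epsilon)=\sum_j\sum_P f_{P,j}\,l_P(f^\epsilon)\le\sum_j (1+\epsilon_j)L\,r_j=L\bigl(1+\beta\sum_j\gamma_j r_j\bigr)$. The comparison lemma gives $L\le L^0=C(f^0)$, and dividing yields $\eSR(\mathcal I)\le 1+\beta\sum_j r_j\gamma_j$.

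For the $\beta$-deviation bound I would first extract the extra structure a deviated Nash flow enjoys. Writing $Q_j:=\min_P q_P^j(f^\beta)$ for the minimum deviated latency of class $j$, \eqref{eq:nash} implies (i) $Q_1\le\cdots\le Q_h$, since $q^j_P$ is non-decreasing in $j$ at a fixed path; and (ii) if classes $j<j'$ carry flow on paths $P,P'$, then adding their equilibrium inequalities gives $(\gamma_{j'}-\gamma_j)(\delta_P(f^\beta)-\delta_{P'}(f^\beta))\ge0$, hence $\delta_{P'}(f^\beta)\le\delta_P(f^\beta)$ and $l_{P'}(f^\beta)\ge l_P(f^\beta)$ --- the more deviation-sensitive a class, the higher-latency and lower-deviation the paths it uses. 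Since $l_P(f^\beta)=Q_j-\gamma_j\delta_P(f^\beta)$ on every flow-carrying path of class $j$, decomposing the social cost by class gives the identity $C(f^\beta)=\sum_j Q_j r_j-\sum_j\gamma_j D_j$ with $D_j:=\sum_P f_{P,j}\delta_P(f^\beta)\ge0$, so $C(f^\beta)\le\sum_j Q_j r_j$. The crux --- and the step I expect to be the main obstacle --- is to show $\sum_j Q_j r_j\le L^0\bigl(1+\beta\max_{m}\{\gamma_m\sum_{p\ge m}r_p\}\bigr)$. The naive estimate $Q_j\le(1+\beta\gamma_j)\min_P l_P(f^\beta)$ (use a minimum-latency path and $\delta\le\beta l$) only reproduces the $\epsilon$-stability bound after Abel summation, so one has to use the series-parallel structure together with the class ordering (ii): intuitively, the low-deviation paths pinning down $Q_j$ for the small classes are shared with the larger classes, so once the threshold $m$ realising the maximum is crossed, the residual subgame of classes $m,\dots,h$ behaves like a homogeneous instance of sensitivity $\gamma_m$ on demand $\sum_{p\ge m}r_p$, for which the homogeneous factor $1+\beta\gamma_m$ applies. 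Making this rigorous calls for a recursive argument over the series-parallel decomposition, peeling off parallel components and tracking in each sub-network how much deviation the suffix $\{m,\dots,h\}$ can absorb; this is the technical heart. Finally, $\gamma_m\sum_{p\ge m}r_p\le\sum_{p\ge m}\gamma_p r_p\le\sum_j r_j\gamma_j$, so the $\beta$-deviation bound never exceeds the $\epsilon$-stability bound.

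For tightness I would build, for every $(r,\gamma)$, single-commodity series-parallel (generalized-Braess type) instances attaining each bound. For $\eSR$ the instance should push every class $j$ onto a path a full factor $(1+\beta\gamma_j)$ worse than an available alternative. For $\bDR$ the construction is more delicate, since it must be consistent with genuine $\beta$-bounded deviations for \emph{all} classes simultaneously: the obvious step-function target ($Q_j\approx L^0$ for $j<m$, $Q_j\approx(1+\beta\gamma_m)L^0$ for $j\ge m$) is not directly realisable --- a deviation-insensitive class would simply defect to the cheap path --- so one needs steeply increasing latencies on shared edges that make such a defection self-defeating. Ensuring this simultaneous consistency across all classes is the main obstacle on the lower-bound side; the matching $\eSR$ example is built from the same family but only has to satisfy the weaker approximate-equilibrium condition, which leaves more freedom.
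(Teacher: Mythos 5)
Your $\epsilon$-stability argument is sound and essentially the paper's: the comparison path is exactly Milchtaich's path $\pi$ with $x_a\le z_a$ and $z_a>0$ on all arcs, giving $\min_P l_P(f^\epsilon)\le l_\pi(f^\epsilon)\le l_\pi(f^0)=C(f^0)$, and the rest follows by averaging over classes. The tightness constructions, which you only sketch as intentions, are in fact easy (parallel arcs; for $\bDR$ one two-arc instance per index $m$, with a constant-latency arc carrying deviation $\beta$ and a deviation-free arc whose increasing latency is tuned so that exactly classes $m,\dots,h$ migrate), so the real issue lies elsewhere.

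The $\beta$-deviation bound has a genuine gap, and the intermediate inequality you isolate as ``the crux'' is actually \emph{false}, so no amount of series-parallel recursion will establish it. You discard the term $\sum_j\gamma_j D_j$ too early: take $h=2$, $r_1=r_2=\tfrac12$, $\gamma_1=1$, $\gamma_2=2$, and a single path with constant latency $L^0$ and deviation $\delta_P=\beta L^0$. Then $Q_1=(1+\beta)L^0$, $Q_2=(1+2\beta)L^0$, so $\sum_j r_jQ_j=(1+\tfrac32\beta)L^0$, while $\max_m\gamma_m\sum_{p\ge m}r_p=1$ and your target reads $\sum_j r_jQ_j\le(1+\beta)L^0$ --- violated for every $\beta>0$ (the theorem itself is unharmed here since $C(f^\beta)=L^0$). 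The negative deviation terms must be kept and played off against the positive ones. The paper's route: let $P_j$ be a \emph{maximum}-latency flow-carrying path of class $j$ and $P_0$ the comparison path; after using Braess-immunity of series-parallel graphs to make latencies on $Z$-arcs constant (so that $l_{P_0}(x)=C(z)$), the pairwise equilibrium inequalities give $\delta_{P_h}(x)\le\cdots\le\delta_{P_1}(x)\le\delta_{P_0}(x)$, and an induction yields
$l_{P_j}(x)\le l_{P_0}(x)+\gamma_1\delta_{P_0}(x)+\sum_{g=1}^{j-1}(\gamma_{g+1}-\gamma_g)\delta_{P_g}(x)-\gamma_j\delta_{P_j}(x)$.
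Summing $r_jl_{P_j}(x)$, the retained $-r_j\gamma_j\delta_{P_j}(x)$ terms combine with the telescoping sum so that the coefficient of $\delta_{P_j}(x)$ becomes $c_{j+1}-c_j$ with $c_j=\gamma_j\sum_{p\ge j}r_p$, and an Abel-summation lemma (Lemma~\ref{lem:max_seq}, using the monotonicity of the $\delta_{P_j}$) bounds everything by $\max_j c_j\cdot\delta_{P_0}(x)\le\beta\max_j c_j\cdot C(z)$. Your observation (ii) about the ordering of deviations across classes is the right ingredient; what is missing is applying it to the max-latency paths together with the telescoping identity, rather than to the minimum deviated latencies $Q_j$ with the deviation mass thrown away.
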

It is not hard to see that the bound on the $\beta$-\deviation ratio is always smaller than the bound on the $\epsilon$-stability ratio.\footnote{This follows from Markov's inequality: for a random variable $Y$, $P(Y \geq t) \leq E(Y)/t$.}
Our bound on the $\beta$-\deviation ratio also yields tight bounds on the \emph{Price of Risk-Aversion} \cite{Nikolova2015} for series-parallel graphs and arbitrary heterogeneous risk-averse populations, both for the $L_1$-norm and $L_2$-norm objective.\footnote{Observe that we show tightness of the bound on parallel arcs, in which case these objectives coincide.}

We need the following technical lemma for the proof of the $\beta$-deviation ratio.

\begin{lemma}\label{lem:max_seq}
Let $0 \leq \tau_{k-1} \leq \dots \leq \tau_1 \leq \tau_0$ and $c_i \geq 0$ for $i = 1,\dots,k$ be given. We have
$
c_1 \tau_0 + \sum_{i = 1}^{k-1} (c_{i+1} - c_i)\tau_i \leq \tau_0 \cdot \max_{i=1,\dots,k} \{c_i\}.
$
\end{lemma}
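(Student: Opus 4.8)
The plan is to prove the inequality by a clean telescoping (Abel summation) argument. First I would rewrite the left-hand side by collecting the coefficient of each $c_i$. Expanding $c_1\tau_0 + \sum_{i=1}^{k-1}(c_{i+1}-c_i)\tau_i$, the term $c_i$ (for $1 \le i \le k-1$) appears with coefficient $\tau_{i-1} - \tau_i$, and $c_k$ appears with coefficient $\tau_{k-1}$. So the LHS equals $\sum_{i=1}^{k-1} c_i(\tau_{i-1}-\tau_i) + c_k\tau_{k-1}$. This is exactly Abel summation of $\sum c$ against the decreasing sequence $\tau$.

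Next I would observe that all the new coefficients are nonnegative: $\tau_{i-1}-\tau_i \ge 0$ by the hypothesis $\tau_k \le \cdots \le \tau_1 \le \tau_0$ (here with the convention running down to $\tau_{k-1}$), and $\tau_{k-1}\ge 0$. Moreover these coefficients sum to a telescoping total of $\tau_0$:
\begin{equation*}
\sum_{i=1}^{k-1}(\tau_{i-1}-\tau_i) + \tau_{k-1} = \tau_0.
\end{equation*}
Hence the LHS is a convex-combination-like sum: a nonnegative combination of the values $c_1,\dots,c_k$ whose weights sum to $\tau_0$. Bounding each $c_i$ by $\max_{i} c_i$ and pulling it out gives $\text{LHS} \le \tau_0 \cdot \max_{i=1,\dots,k} c_i$, which is the claim. (If one prefers, the same can be phrased as: $\text{LHS} \le (\max_i c_i)\sum_i(\text{weights}) = \tau_0\max_i c_i$.)

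There is essentially no main obstacle here — the only thing to be careful about is the index bookkeeping, since the $\tau$'s are indexed $0,\dots,k-1$ while the $c$'s are indexed $1,\dots,k$, so the regrouping must be done without an off-by-one slip. One should also note that the hypothesis $c_i \ge 0$ is not actually needed for the stated inequality (it holds for arbitrary real $c_i$ once the weights are nonnegative and sum to $\tau_0$); but it does no harm to keep it as stated, and it matches the application where the $c_i$ are demands. If a degenerate case $k=1$ arises, the sum is empty and the inequality reads $c_1\tau_0 \le \tau_0 c_1$, which is trivially true.
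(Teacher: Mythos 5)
Your proof is correct, and it takes a genuinely different route from the paper's. The paper proves the lemma by induction on $k$, splitting the inductive step into two cases according to the sign of $c_{k+1}-c_k$: when it is nonpositive the last term is dropped (using $\tau_k\ge 0$), and when it is positive the factor $\tau_k$ is enlarged to $\tau_{k-1}$ so that the last two terms merge into $(c_{k+1}-c_{k-1})\tau_{k-1}$, after which the induction hypothesis is applied to the $k$-element set $\{c_1,\dots,c_{k-1},c_{k+1}\}$. Your Abel-summation argument avoids induction entirely: regrouping the left-hand side as $\sum_{i=1}^{k-1}c_i(\tau_{i-1}-\tau_i)+c_k\tau_{k-1}$ exhibits it as a nonnegative combination of the $c_i$ with weights telescoping to $\tau_0$, from which the bound is immediate. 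Your version is shorter, makes the structural reason for the inequality transparent (it is a weighted average scaled by $\tau_0$), and, as you correctly note, shows that the hypothesis $c_i\ge 0$ is superfluous; the paper's case analysis, by contrast, uses the nonnegativity of $\tau_0$ and $\tau_k$ at specific points but is somewhat longer and obscures why the bound holds. Your index bookkeeping checks out, so there is nothing to fix.
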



\begin{proof}[Theorem~\ref{thm:hetero}, $\beta$-\deviation ratio]
Let $x = f^\beta$ be a $\beta$-\perturbed Nash flow with path deviations $(\delta_P)_{P \in \mathcal{S}} \in \Delta(\beta)$ and let $z = f^0$ be an original Nash flow.
Let $X = \{a \in A : x_a > z_a\}$ and $Z = \{a \in A : z_a \geq x_a \text{ and } z_a > 0\}$ (arcs with $x_a = z_a = 0$ may be removed without loss of generality).

In order to analyze the ratio $C(x)/C(z)$ we first argue that we can assume without loss of generality that the latency function $l_a(y)$ is constant for values $y \geq x_a$ for all arcs $a \in Z$. To see this, note that we can replace the function $l_a(\cdot)$ with the function $\hat{l}_a$ defined by  $\hat{l}_a(y) = l_a(x_a)$ for all $y \geq x_a$ and $\hat{l}_a(y) = l_a(y)$ for $y \leq x_a$. In particular, this implies that the flow $x$ is still a $\beta$-\perturbed Nash flow for the same path deviations as before. This holds since for any path $P$ the latency $l_P(x)$ remains unchanged if we replace the function $l_a$ by $\hat{l}_a$. 

By definition of arcs in $Z$, we have $x_a \leq z_a$ and therefore $\hat{l}_a(z_a) = l_a(x_a) \leq l_a(z_a)$. Let $z'$ be an original Nash flow for the instance with $l_a$ replaced by $\hat{l}_a$. Then we have $C(z') \leq C(z)$ using the fact that series-parallel graphs are immune to the Braess paradox, see Milchtaich \cite[Lemma 4]{Milchtaich2006}. Note that, in particular, we find
$C(x)/C(z) \leq C(x)/C(z')$. By repeating this argument, we may without loss of generality assume that all latency functions $l_a$ are constant between $x_a$ and $z_a$ for $a \in Z$. Afterwards, we can even replace the function $\hat{l}_a$ by a function that has the constant value of $l_a(x_a)$ everywhere. 
In the remainder of the proof, we will denote $P_j$ as a flow-carrying arc for sensitivity class $j \in [h]$ that maximizes the path latency amongst all flow-carrying path for sensitivity class $j \in [h]$, i.e., 
$
P_j = \text{argmax}_{P \in \mathcal{P} : x_{P,j} > 0} \{ l_P(x) \}.
$
Moreover, there also exists a path $P_0$ with the property that $z_a \geq x_a$ and $z_a > 0$ for all arcs $a \in P_0$ (see, e.g., Lemma 2 \cite{Milchtaich2006}).

For fixed $a < b \in \{1,\dots,h\}$, the Nash conditions imply that (these steps are of a similar nature as Lemma 1 \cite{Fleischer2005})
\begin{align*}
l_{P_a}(x) + \gamma_{a}\cdot \delta_{P_a}(x) &\leq l_{P_b}(x) + \gamma_{a} \cdot \delta_{P_b}(x)  \\
l_{P_b}(x) + \gamma_{b}\cdot \delta_{P_b}(x) &\leq l_{P_a}(x) + \gamma_{b} \cdot \delta_{P_a}(x).  
\end{align*}
Adding up these inequalities implies that $(\gamma_b - \gamma_a) \delta_{P_b}(x) \leq (\gamma_b - \gamma_a) \delta_{P_a}(x)$, which in turn yields that $\delta_{P_b}(x) \leq \delta_{P_a}(x)$ (using that $\gamma_a < \gamma_b$ if $a < b$). Furthermore, we also have
\begin{equation}\label{eq:nash1}
l_{P_1}(x) + \gamma_1 \delta_{P_1}(x) \leq l_{P_0}(x) + \gamma_1 \delta_{P_0}(x),
\end{equation}
and $l_{P_0}(x) = l_{P_0}(z) \leq l_{P_1}(z) \leq l_{P_1}(x)$, which can be seen as follows. The equality follows from the fact that $l_a$ is constant for all $a \in Z$ and, by choice, $P_0$ only consists of arcs in $Z$. The first inequality follows from the Nash conditions of the original Nash flow $z$, since there exists a flow-decomposition in which the path $P_0$ is used (since the flow on all arcs of $P_0$ is strictly positive in $z$). The second inequality follows from the fact that 
$$
\sum_{e \in P_1} l_e(z_e) = \sum_{e \in P_1 \cap X} l_e(z_e)  + \sum_{e \in P_1 \cap Z} l_e(z_e) \leq \sum_{e \in P_1 \cap X} l_e(x_e)  + \sum_{e \in P_1 \cap Z} l_e(x_e)
$$
using that $z_e \leq x_e$ for $e \in X$ and the fact that latency functions for $e \in Z$ are constant. In particular, we find that $l_{P_0}(x) \leq l_{P_1}(x)$. Adding this inequality to  (\ref{eq:nash1}), we obtain $\gamma_1\delta_{P_1}(x) \leq \gamma _1 \delta_{P_0}(x)$ and therefore $\delta_{P_1}(x) \leq \delta_{P_0}(x)$. Thus
$
\delta_{P_h}(x) \leq \delta_{P_{h-1}}(x) \leq \dots \leq \delta_{P_1}(x) \leq \delta_{P_0}(x).
$
Moreover, by using induction (see appendix) it can be shown that
\begin{equation}\label{eq:induction}
l_{P_j}(x) \leq l_{P_0}(x) + \gamma_1 \delta_{P_0}(x) + \bigg[\sum_{g = 1}^{j-1} (\gamma_{g+1} - \gamma_g)\delta_{P_g}(x)\bigg] - \gamma_j \delta_{P_j}(x).
\end{equation}
Using (\ref{eq:induction}), we then have
\begin{eqnarray}
C(x) &\leq & \sum_{j=1}^h r_j l_{P_j}(x) \ \ \ \ \text{(by choice of the paths $P_j$)} \nonumber \\
&\leq& \sum_{j=1}^h r_j \left(l_{P_0}(x) + \gamma_1 \delta_{P_0}(x) + \bigg[\sum_{g = 1}^{j-1} (\gamma_{g+1} - \gamma_g)\delta_{P_g}(x)\bigg] - \gamma_j \delta_{P_j}(x) \right) \nonumber \\
&=& l_{P_0}(x) + \gamma_1 \delta_{P_0}(x) + \sum_{j=1}^{h} (r_{j+1}+\dots+ r_h)(\gamma_{j+1} - \gamma_j)\delta_{P_j}(x) - r_j\gamma_j \delta_{P_j}(x) \nonumber  \\
&\leq & l_{P_0}(x) + \gamma_1 \delta_{P_0}(x) \nonumber \\
& & +\sum_{j=1}^{h-1} \bigg[(r_{j+1}+\dots+ r_h)\gamma_{j+1} - (r_j+r_{j+1}+\dots+ r_h)\gamma_j\bigg]\delta_{P_j}(x) \nonumber
\end{eqnarray}
In the last inequality, we leave out the last negative term $-r_h\gamma_h\delta_{P_h}(x)$.
Note that $\gamma_1 = (r_1 + \dots + r_h)\gamma_1$ since we have normalized the demand to $1$. 
We can then apply Lemma \ref{lem:max_seq} with $\tau_i = \delta_{P_i}(x)$ for $i = 0,\dots, h-1$ and 
$
c_i = \gamma_i \cdot \sum_{p = i}^h r_p 
$
for $i = 1,\dots, k$. 
Continuing the estimate, we get
$$
C(x) \leq l_{P_0}(x) + \max_{j \in [h]} \bigg\{ \gamma_j \cdot \sum_{p = j}^h r_p  \bigg\} \cdot \delta_{P_0}(x)  \leq \bigg[1 + \beta \cdot \max_{j \in [h]} \bigg\{ \gamma_j \bigg( \sum_{p = j}^h r_p \bigg) \bigg\}\bigg] C(z)
$$
where for the second inequality we use that $\delta_{P_0}(x) \leq \beta l_{P_0}(x)$, which holds by definition, and $l_{P_0}(x) = l_{P_0}(z) = C(z)$, which holds because $z$ is an original Nash flow and all arcs in $P_0$ have strictly positive flow in $z$ (and because of the fact that that all arcs in $P_0$ have a constant latency functions). 

To prove tightness, fix $j \in [h]$ and consider the following instance on two arcs. We take $(l_1(y),\delta_1(y)) = (1,\beta)$ and $(l_2(y),\delta_2(y))$ with $\delta_2(y) = 0$ and $l_2(y)$ a strictly increasing function satisfying $l_2(0) = 1 + \epsilon$ and $l_2(r_j+r_{j+1}+\dots+r_h) = 1 + \gamma_j \beta$, where $\epsilon < \gamma_j \beta$. The (unique) original Nash flow is given by $z = (z_1,z_2) = (1,0)$ with $C(z) = 1$. The (unique) $\beta$-\perturbed Nash flow $x$ is given by $x = (x_1,x_2) = (r_1 + r_2 + \dots + r_{j-1}, r_j+r_{j+1}+\dots+r_h)$ with $C(x) = 1 + \beta \cdot \gamma_j (r_j + \dots + r_h)$. Since this construction holds for all $j \in [h]$, we find the desired lower bound.\qed

\end{proof}

\paragraph{Continuous sensitivity distributions.}

We obtain a similar result for more general (not necessarily discrete) sensitivity distributions. That is, we are given a Lebesgue integrable \emph{sensitivity density function} $\psi : \R_{\geq 0} \rightarrow \R_{\geq 0}$ over the total demand. Since we can normalize the demand to $1$, we have the condition that $\int_0^\infty \psi(y) dy = 1$.  We then find the following natural generalizations of our upper bounds:
\begin{enumerate}
\item $\eSR(\mathcal{I}) \le 1 + \beta \int_{0}^\infty y\cdot \psi(y)dy$, and
\item $\bDR(\mathcal{I}) \le 1 + \beta \cdot \sup_{t \in \R_{\geq 0}} \big\{ t \cdot \int_t^\infty \psi(y) dy \big\}$. 
\end{enumerate}
These bounds are both asymptotically tight for all distributions. Details are given in Corollary \ref{cor:het} in the appendix.

\section{Homogeneous population}\label{sec:approx}

The reverse of Proposition~\ref{prop:inclusion} also holds for homogeneous players in single-commodity instances. 
As a consequence, the set of $\beta$-\perturbed Nash flows and the set of $\epsilon$-approximate Nash flows with $\epsilon = \beta \gamma$ coincide in this case.

Recall that for homogeneous players we have $\gamma_{ij} = \gamma$ for all $i \in [k]$, $j \in [h_i]$ and some $\gamma \ge 0$. 

\begin{proposition}\label{prop:equiv}
Let $\mathcal{I}$ be a single-commodity congestion game with homogeneous players. $f$ is an $\epsilon$-approximate Nash flow for $\mathcal{I}$ if and only if $f$ is a $\beta$-\perturbed Nash flow for $\mathcal{I}$ with $\epsilon = \beta\gamma$. 
\end{proposition}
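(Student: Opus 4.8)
<br>

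The plan is to prove the two implications separately. The forward direction, that every $\beta$-\perturbed Nash flow is an $\epsilon$-approximate Nash flow with $\epsilon = \beta\gamma$, is immediate from Proposition~\ref{prop:inclusion} specialized to homogeneous sensitivities $\gamma_{ij} = \gamma$ (so that $\epsilon_{ij} = \beta\gamma = \epsilon$); alternatively it follows in one line from $0 \le \delta_P(f) \le \beta l_P(f)$ together with the deviated Nash conditions~\eqref{eq:nash}. So the substantive part is the converse.

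For the converse, let $f$ be an $\epsilon$-approximate Nash flow. If $\gamma = 0$ then $\epsilon = 0$ and both notions reduce to an ordinary Nash flow (the deviations do not enter $q_P$), so assume $\gamma > 0$ and put $\beta = \epsilon/\gamma$. The idea is to ``lift'' the latencies of all flow-carrying paths to a common target value $T$ by bounded deviations, and to fill the gap up to $T$ on the remaining paths. Concretely, set $T = \max\{\, l_P(f) : P \in \mathcal{S},\ f_P > 0 \,\}$ and define, for every $P \in \mathcal{S}$ and every $g \in \mathcal{F}(\mathcal{S})$,
\[
\delta_P(g) = \min\bigl\{\, \beta\, l_P(g),\ \max\{\, 0,\ (T - l_P(g))/\gamma \,\} \,\bigr\}.
\]
By construction $0 \le \delta_P(g) \le \beta\, l_P(g)$, so $\delta = (\delta_P)_{P \in \mathcal{S}} \in \Delta(\beta)$, and $\delta_P$ is a genuine function on all of $\mathcal{F}(\mathcal{S})$ (its values away from $f$ are irrelevant but legal).

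The key step is to show that at $g = f$ the outer $\min$ is always attained by its second argument, which amounts to the inequality $T \le (1+\epsilon)\, l_P(f)$ for \emph{every} path $P$. For a flow-carrying path $P$, \eqref{def:epsnash} gives $l_P(f) \le (1+\epsilon)\, l_{P'}(f)$ for all $P'$, hence in particular $T = \max_{f_Q > 0} l_Q(f) \le (1+\epsilon)\, l_P(f)$. For an arbitrary path $P'$, apply the same inequality with $P$ a flow-carrying path attaining $l_P(f) = T$ to get $T \le (1+\epsilon)\, l_{P'}(f)$. Therefore $\max\{0, (T - l_P(f))/\gamma\} \le \epsilon\, l_P(f)/\gamma = \beta\, l_P(f)$, so $\delta_P(f) = \max\{0, (T - l_P(f))/\gamma\}$ and consequently $q_P(f) = l_P(f) + \gamma\,\delta_P(f) = \max\{\, l_P(f),\ T \,\}$. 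In particular $q_P(f) = T$ for every flow-carrying $P$ (since $l_P(f) \le T$ there) and $q_{P'}(f) \ge T$ for every path $P'$; this is exactly the deviated Nash condition~\eqref{eq:nash} (with the homogeneous $q_P = q_P^j$), so $f$ is a $\beta$-\perturbed Nash flow.

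I expect the main obstacle to be precisely this boundedness verification $T \le (1+\epsilon)\, l_P(f)$: it uses the approximate-equilibrium inequality in ``both directions'' among flow-carrying paths and once more from a latency-maximizing flow-carrying path to an arbitrary path. Note that the single-commodity hypothesis is what makes a single common target $T$ — and hence a single well-defined per-path deviation $\delta_P$ — possible; with several commodities sharing a path one could be forced to assign conflicting deviation values.
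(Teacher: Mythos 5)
Your proof is correct and follows essentially the same route as the paper's: the forward direction is the homogeneous specialization of Proposition~\ref{prop:inclusion}, and the converse is proved by lifting every flow-carrying path to the common value $T=\max\{l_P(f): f_P>0\}$ via deviations $\delta_P(f)=(T-l_P(f))/\gamma$, whose feasibility $0\le\delta_P(f)\le\beta l_P(f)$ is exactly the approximate-Nash inequality applied to a latency-maximizing flow-carrying path. The only cosmetic difference is your treatment of non-flow-carrying paths (capping at $T$ rather than assigning the full $\beta l_Q(f)$ as the paper does); both choices satisfy the required condition $q_{P'}(f)\ge T$.
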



\noindent \emph{Upper bound on the stability ratio.}
Our main result in this section is an upper bound on the $\epsilon$-stability ratio. Given the above equivalence, this bound also applies to the $\beta$-\deviation ratio with $\epsilon = \beta\gamma$.

The following concept of alternating paths is crucial. For single-commodity instances an alternating path always exists (see, e.g., \cite{Nikolova2015}).


\begin{definition}[Alternating path \cite{Lin2011,Nikolova2015}]
Let $\mathcal{I}$ be a single-commodity network congestion game and let $x$ and $z$ be feasible flows. We partition the edges $E = X \cup Z$ such that 
$Z = \{a \in E : z_a \geq x_a \text{ and } z_a > 0\}$ and $X = \{a \in E : z_a < x_a \text{ or } z_a = x_a = 0\}$. We say that $\pi$
is an alternating $s,t$-path if the arcs in $\pi \cap Z$ are oriented in the direction of $t$, and the arcs in $\pi \cap X$ are oriented in the direction of $s$.
We call the number of backward arcs on $\pi$ the \emph{backward length} of $\pi$ and refer to it by $q(\pi) = |\pi \cap X|$.
\label{def:alt_path}
\end{definition}

\begin{theorem}\label{thm:approx_single}
Let  $\mathcal{I}$ be a single-commodity network congestion game. Let $\epsilon \ge 0$ be fixed and consider an arbitrary alternating path $\pi$ with backward length $q = q(\pi)$. If $\epsilon < 1/q$, then the $\epsilon$-stability ratio is bounded by 
$$
\eSR(\mathcal{I}) \le \frac{1 + \epsilon}{1 - \epsilon \cdot q} \leq \frac{1 + \epsilon}{1 - \epsilon \cdot n}.
$$
\end{theorem}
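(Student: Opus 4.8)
The plan is to relate the given $\epsilon$-approximate Nash flow $x := f^\epsilon$ to the exact Nash flow $z := f^0$ in two steps: first reduce the bound on $C(x)/C(z)$ to a statement about \emph{shortest} $s,t$-paths under the two flows, and then extract that statement from the alternating path $\pi$ by telescoping a potential along it — once the potential of $z$, once the shortest-path potential of $x$ — so that each of the $q$ backward arcs of $\pi$ costs only an additive $\epsilon$-fraction of the shortest $x$-path. \textbf{Reduction.} Let $r$ be the demand. Since $z$ is an exact Nash flow, all its flow-carrying $s,t$-paths share a common latency $L_z$, which equals the length of a shortest $s,t$-path with respect to the latencies $l_a(z_a)$; hence $C(z)=r\,L_z$. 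For $x$, let $d^\ast$ be the length of a shortest $s,t$-path with respect to the latencies $l_a(x_a)$ and let $L_x$ be the maximum latency of an $x$-flow-carrying path, so $C(x)\le r\,L_x$. The $\epsilon$-approximate Nash condition, applied to a worst and to a shortest path, gives $L_x\le(1+\epsilon)\,d^\ast$. It therefore suffices to prove $d^\ast\le L_z/(1-\epsilon q)$: this yields $C(x)\le r\,L_x\le r(1+\epsilon)d^\ast\le\tfrac{1+\epsilon}{1-\epsilon q}\,rL_z=\tfrac{1+\epsilon}{1-\epsilon q}C(z)$, and the second inequality of the theorem follows since $q=|\pi\cap X|\le n-1<n$ (a simple path has at most $n-1$ arcs) and $t\mapsto\tfrac{1+\epsilon}{1-\epsilon t}$ is increasing (when $\epsilon<1/n$, so the bound is meaningful).

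\textbf{Telescoping with the potential of $z$.} Let $\phi$ be the shortest-path potential induced by $z$: $\phi(s)=0$, $\phi(t)=L_z$, and $\phi(w)-\phi(v)\le l_a(z_a)$ for every arc $a=(v,w)$, with equality whenever $z_a>0$. Walk along $\pi$ from $s$ to $t$ and sum the increments of $\phi$. A forward arc $a\in\pi\cap Z$ has $z_a>0$ and $z_a\ge x_a$, so it contributes $l_a(z_a)\ge l_a(x_a)$; a backward arc $a\in\pi\cap X$ (oriented towards $s$, hence traversed against its orientation) has $z_a<x_a$, so it contributes at least $-l_a(z_a)\ge-l_a(x_a)$. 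Telescoping yields
\[
L_z \;\ge\; \sum_{a\in\pi\cap Z}l_a(x_a)\;-\;\sum_{a\in\pi\cap X}l_a(x_a).
\]

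\textbf{Telescoping with the shortest-path potential of $x$, and the per-arc estimate.} Let $\psi(v)$ and $\bar\psi(v)$ be the lengths of a shortest $s,v$-path and a shortest $v,t$-path with respect to the latencies $l_a(x_a)$; note $\psi(w)\le\psi(v)+l_a(x_a)$ for every arc $a=(v,w)$, and $\psi(v)+\bar\psi(v)\ge d^\ast$ for every vertex $v$ (concatenating two shortest paths gives an $s,t$-walk). Sum the increments of $\psi$ along $\pi$. A forward arc $a=(v,w)\in\pi\cap Z$ contributes $\psi(w)-\psi(v)\le l_a(x_a)$. For a backward arc $a=(w,v)\in\pi\cap X$ (so along $\pi$ we move from $v$ to $w$) we have $x_a>0$, so some $x$-flow-carrying path $Q$ has the shape $(s\rightsquigarrow w)\cdot a\cdot(v\rightsquigarrow t)$; writing $\alpha,\sigma$ for the latencies of its prefix and suffix, $\alpha+l_a(x_a)+\sigma=l_Q(x)\le(1+\epsilon)d^\ast$ while $\psi(w)\le\alpha$ and $\psi(v)\ge d^\ast-\bar\psi(v)\ge d^\ast-\sigma$, hence
\[
\psi(w)-\psi(v)\;\le\;\alpha+\sigma-d^\ast\;=\;l_Q(x)-l_a(x_a)-d^\ast\;\le\;\epsilon\,d^\ast-l_a(x_a).
\]
Summing over the $q$ backward and all forward arcs, $d^\ast=\psi(t)-\psi(s)\le\sum_{a\in\pi\cap Z}l_a(x_a)-\sum_{a\in\pi\cap X}l_a(x_a)+\epsilon q\,d^\ast$; combined with the inequality from the previous step this becomes $d^\ast\le L_z+\epsilon q\,d^\ast$, i.e.\ $(1-\epsilon q)d^\ast\le L_z$ (dividing uses $\epsilon<1/q$), which is exactly what the reduction requires, so $\eSR(\mathcal I)\le\tfrac{1+\epsilon}{1-\epsilon q}$.

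\textbf{Main obstacle.} The decisive step is the per-backward-arc estimate: to get the denominator $1-\epsilon q$ rather than a product like $(1+\epsilon)^q$, the slack of each backward arc must be at most $\epsilon\,d^\ast$, and this needs the \emph{two-sided} shortest-path information under $x$ — the lower bound $\psi(v)\ge d^\ast-\bar\psi(v)$ from the walk argument together with the approximate-Nash bound $l_Q(x)\le(1+\epsilon)d^\ast$ on a flow-carrying path routed through that arc. Secondary care is needed to verify that all vertices of $\pi$ are reachable from $s$ and co-reachable to $t$ under $x$ (so that $\phi,\psi,\bar\psi$ are finite on them), which follows from how $\pi$ is constructed, and to dispose of degenerate arcs with $x_a=z_a=0$, for which all inequalities above hold trivially.
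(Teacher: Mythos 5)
Your proof is correct, and while it rests on the same two pillars as the paper's argument --- the alternating path $\pi$, and an $\epsilon$-loss charged to each of the $q$ backward arcs via a flow-carrying path of $x$ through that arc --- the execution is genuinely different. The paper decomposes $\pi$ into maximal sections $Z_1X_1\cdots Z_\eta$ and chains the approximate-Nash inequalities section by section, comparing each flow-carrying path $C_{ij}X_{ij}D_{ij}$ against an explicitly rerouted alternative (e.g.\ $C_{i-1,q_{i-1}}Z_iD_{i1}$ or $C_{i,j-1}D_{ij}$), bounding each error term by $\epsilon P^{\max}$ and arriving at $(1-\epsilon q)\,l_{P^{\max}}(x)\le(1+\epsilon)\bigl[\sum_{e\in Z\cap\pi}l_e(x_e)-\sum_{e\in X\cap\pi}l_e(x_e)\bigr]$. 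You instead reduce everything to the shortest-path distance $d^\ast$ under $x$ and telescope the potential $\psi$ along $\pi$, charging each backward arc at most $\epsilon d^\ast-l_a(x_a)$ via the two-sided estimate $\psi(v)\ge d^\ast-\bar\psi(v)$; this buys you a cleaner argument that avoids the bookkeeping of maximal sections and the need to check that the rerouted concatenations are valid $s,t$-paths, since every comparison is made against the global shortest path. Your transfer to $z$ (the $\phi$-telescoping giving $L_z\ge\sum_{e\in Z\cap\pi}l_e(x_e)-\sum_{e\in X\cap\pi}l_e(x_e)$) coincides with the paper's corresponding step. One small correction: for arcs with $x_a=z_a=0$ the backward-arc estimate does not ``hold trivially'' --- there is no flow-carrying path of $x$ through such an arc, so your key inequality cannot be derived for it. The standard fix, which the paper also applies implicitly, is to delete such arcs from the network beforehand; this only removes backward arcs from $\pi$, shrinks $q$, and therefore strengthens the claimed bound.
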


Note that the restriction on $\epsilon$ stated in the theorem always holds if 
$\epsilon < 1/n$. In particular, for $\epsilon \ll 1/n$ we roughly get $\eSR(\mathcal{I}) \leq 1 + \epsilon n$. 
The proof of Theorem~\ref{thm:approx_single} is inspired by a technique of Nikolova and Stier-Moses \cite{Nikolova2015}, but technically more involved.

\begin{proof}

Let $x = f^\epsilon$ be an $\epsilon$-approximate Nash flow and let $z = f^0$ an original Nash flow. Let $\pi = Z_1X_1Z_2X_2\dots Z_{\eta-1}X_{\eta-1}Z_{\eta}$ be an alternating path for $x$ and $z$, where $Z_i$ and $X_i$ are maximal sections consisting of consecutive arcs, respectively, in $Z$ and $X$ (i.e., $Z_i \subseteq Z$ and $X_i \subseteq X$ for all $i$).
Furthermore, we let $q_i = |X_i|$ and write $X_i =  (X_{iq_i},\dots,X_{i2},X_{i1})$, where $X_{ij}$ are the arcs in the section $X_i$. 
By definition, for every arc $X_{ij}$ there exists a  path $C_{ij}X_{ij}D_{ij}$ that is flow-carrying for $x$.\footnote{Note that for a Nash flow one can assume that there is a flow-carrying path traversing all arcs $X_{iq_i},\dots,X_{i1}$; but this cannot be done for an approximate Nash flow.}

For convenience, we define $C_{01} = D_{\eta,0} = \emptyset$. Furthermore, we denote $P^{\max}$ as a path maximizing $l_P(x)$ over all paths $P \in \mathcal{S}$. For convenience, we will abuse notation, and write $Q = Q(x) = \sum_{a \in Q} l_{a}(x)$ for $Q \subseteq E$.

Note that for all $i,j$:
\begin{equation}\label{eq:pmax_upperbound}
C_{ij}(x) + X_{ij}(x) + D_{ij}(x) \leq P^{\max}(x).
\end{equation}

\begin{figure}[t!] \centering

\scalebox{0.8}{
\begin{tikzpicture}[
  ->,
  >=stealth',
  shorten >=0.5pt,
  auto,
  semithick,
  every state/.style={circle,minimum size=1pt},
]
r = 1
\begin{scope}

  \node[state]  (s)         					 {};
  \node[state]  (v4) [right=3cm of s]  			 {};
  \node[state]  (w4) [above=0.75cm of v4] {};
  \node[state]  (w3) [right=3cm of v4] 				 {};     
  \node[state]  (v3) [below=0.75cm of w3] 				 {};
  \node[state]  (v3b) [below=0.75cm of v3] 				 {};
  \node[state]  (t)  [right=3cm of w3] 				 {};
  
\path[every node/.style={sloped,anchor=south,auto=false}]
(s) edge[line width=1.5pt, bend left=15] 	node {$Z_1$} (w4)            
(s) edge						node {$C_{11}$} (v4)
(s) edge[bend right=10]        				node {$C_{21}$} (v3)
(s) edge[bend right=15]            			node {$C_{22}$} (v3b)
(v4) edge[line width=1.5pt]             	node {$X_{11}$} (w4)
(v4) edge[line width=1.5pt]         		node {$Z_2$} (w3)
(v3) edge[line width=1.5pt]               	node {$X_{21}$} (w3)
(v3b) edge[line width=1.5pt]               	node {$X_{22}$} (v3)
(w4) edge[bend left=10]node {$D_{11}$} (t)            
(w3) edgenode {$D_{21}$} (t)
(v3) edge[bend right=10] node {$D_{22}$} (t)
(v3b) edge[line width=1.5pt, bend right=15] node {$Z_3$} (t);
        
\end{scope}
\end{tikzpicture}}
\label{fig:braess}
\caption{Sketch of the situation in the proof of Theorem \ref{thm:approx_single} with $q_1 = 1$ and $q_2 = 2$.}
\end{figure}
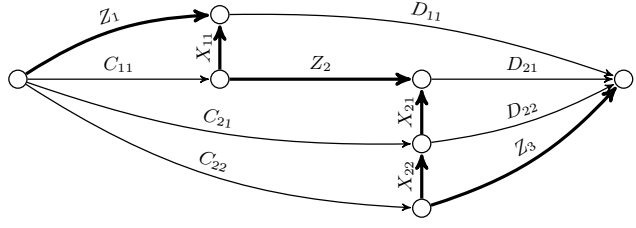
Fix some $i \in \{1,\dots,\eta-1\}$. Then we have $C_{i1} + X_{i1} + D_{i1} \leq (1+\epsilon)(C_{i-1,q_{i-1}} + Z_i + D_{i1})$ by definition of an $\epsilon$-approximate Nash flow. This implies that (leaving out $D_{i1}$ on both sides)
\begin{eqnarray}
C_{i1} + X_{i1} &\leq &(1+\epsilon)Z_i + C_{i-1,q_{i-1}} + \epsilon(C_{i-1,q_{i-1}} + D_{i1}). \nonumber 
\end{eqnarray}
Furthermore, for all $j \in \{2,\dots,q_i\}$, we have $C_{ij} + X_{ij} + D_{ij} \leq (1+\epsilon)(C_{i,j-1} + D_{ij})$ which implies (again leaving out $D_{ij}$ on both sides)
$$
C_{ij} + X_{ij} \leq C_{i,j-1} + \epsilon(C_{i,j-1} + D_{ij}).
$$
Adding up these inequalities for $j \in\{1,\dots,q_i\}$ and 
subtracting $\sum_{j=1}^{q_i-1} C_{ij}$ from both sides, we obtain for all $i \in \{1,\dots,\eta-1\}$
\begin{equation}\label{eq:alt_general_i}
C_{i,q_i} + \sum_{j = 1}^{q_i} X_{ij} \leq C_{i-1,q_{i-1}} + (1+\epsilon)Z_i + \epsilon \bigg( \sum_{j = 1}^{q_i} D_{ij} + C_{i-1,q_{i-1}} + \sum_{j=1}^{q_i-1}C_{ij}\bigg).
\end{equation}
Moreover, we also have
\begin{equation}\label{eq:alt_pmax}
P^{\max} \leq (1+\epsilon)(C_{\eta-1,\eta-1} + Z_{\eta}) =  C_{\eta-1,\eta-1} + (1+\epsilon)Z_{\eta} + \epsilon C_{\eta-1,\eta-1}.
\end{equation}
Adding up the inequalities in (\ref{eq:alt_general_i}) for all $i \in \{1,\dots,\eta-1\}$, and the inequality in (\ref{eq:alt_pmax}), we obtain
$$
P^{\max} + \sum_{i=1}^{\eta-1} C_{i,q_i} + \sum_{i = 1}^{\eta-1} \sum_{j=1}^{q_i} X_{ij} \leq \sum_{i=1}^{\eta-1} C_{i,q_i} + (1+\epsilon) \sum_{i = 1}^{\eta} Z_i + \epsilon \bigg(\sum_{i = 1}^{\eta-1} \sum_{j=1}^{q_i} C_{ij} + D_{ij} \bigg)
$$
which simplifies to
\begin{equation}\label{eq:cum}
P^{\max} + \sum_{i = 1}^{\eta-1} \sum_{j=1}^{q_i} X_{ij} \leq  (1+\epsilon) \sum_{i = 1}^{\eta} Z_i + \epsilon \bigg(\sum_{i = 1}^{\eta-1} \sum_{j=1}^{q_i} C_{ij} + D_{ij} \bigg).
\end{equation}
Using (\ref{eq:pmax_upperbound}), we obtain
$$
\sum_{i = 1}^{\eta-1} \sum_{j=1}^{q_i} C_{ij} + D_{ij} \leq \sum_{i = 1}^{\eta-1} \sum_{j=1}^{q_i} P^{\max} - X_{ij} = \bigg(\sum_{i=1}^{\eta-1} q_i\bigg)P^{\max} - \sum_{i = 1}^{\eta-1} \sum_{j=1}^{q_i} X_{ij}.
$$
Combining this with (\ref{eq:cum}), and rearranging some terms, we get
\begin{align*}\label{eq:cum_final}
(1 - \epsilon \cdot q)P^{\max} &\leq  (1+\epsilon) \bigg[ \sum_{i = 1}^{\eta} Z_i -  \sum_{i = 1}^{\eta-1} \sum_{j=1}^{q_i} X_{ij}\bigg] 
= (1+\epsilon) \bigg[ \sum_{e \in Z \cap \pi} l_e(x_e) -  \sum_{e \in X \cap \pi} l_e(x_e)\bigg]
\end{align*}
where $q = q(\pi) = \sum_{i = 1}^{\eta-1} q_i$ is the backward length of $\pi$. 

Similarly (see also \cite[Lemma 4.5]{Nikolova2015}), it can be shown that
\begin{equation}\label{eq:z}
l_Q(z) \geq \sum_{e \in Z \cap \pi} l_e(z_e) - \sum_{e \in X \cap \pi} l_e(z_e)
\end{equation}
for any path $Q$ with $z_Q > 0$ (these all have the same latency, since $z$ is an original Nash flow). Using a similar argument as in \cite[Theorem 4.6]{Nikolova2015}, we obtain
\begin{eqnarray}
(1 - \epsilon \cdot q)l_{P^{\max}}(x) &\leq & (1+\epsilon) \bigg[ \sum_{e \in Z \cap \pi} l_e(x_e) -  \sum_{e \in X \cap \pi} l_e(x_e)\bigg] \nonumber \\
& \leq & (1+\epsilon) \bigg[ \sum_{e \in Z \cap \pi} l_e(z_e) -  \sum_{e \in X \cap \pi} l_e(z_e)\bigg] 
\nonumber 
\le 
(1+\epsilon)l_Q(z).
\end{eqnarray}
By multiplying both sides with the demand $r$, we obtain
$
(1 - \epsilon \cdot q) C(x) \leq (1 - \epsilon \cdot q) r \cdot l_{P^{\max}}(x) \leq (1+\epsilon)r \cdot l_Q(z) = (1+\epsilon) C(z)
$
for $\epsilon < 1/q$, which proves the claim.
\qed
\end{proof}



\paragraph{Tight bound on the stability ratio.}

In this section, we consider instances for which all backward sections of the alternating path $\pi$ consist of a single arc., i.e., $q_i = 1$ for all $i = 1,\dots,\eta-1$. We then have $q = \sum_{i=1}^{\eta-1} q_i \leq \lfloor n/2 \rfloor - 1$ since every arc in $X$ must be followed directly by an arc in $Z$ (and we can assume w.l.o.g. that the first and last arc are contained in $Z$). 
By Theorem \ref{thm:approx_single}, we obtain
$
\eSR(\mathcal{I}) \leq (1 + \epsilon)/(1 - \epsilon \cdot (\lfloor n/2 \rfloor - 1))
$
for all $\epsilon < 1/(\lfloor n/2 \rfloor - 1)$. 
We show that this bound is tight. Further, we show that there exist instances for which $\eSR(\mathcal{I})$ is unbounded for $\epsilon \geq  1/(\lfloor n/2 \rfloor - 1)$.
This completely settles the case of $q_i = 1$ for all $i$.

Our construction is based on the \emph{generalized Braess graph} \cite{Roughgarden2006}. 
 By construction, alternating paths for these graphs satisfy $q_i = 1$ for all $i$ (see Figure \ref{fig:braess_5} in the appendix for an example and a formal definition of these graphs).

\begin{theorem}\label{thm:braess_tight}
Let $n = 2m$ be fixed and let $\mathcal{B}^m$ be the set of all instances on the generalized Braess graph with $n$ nodes. Then
$$
\sup_{\mathcal{I} \in \mathcal{B}^m} \eSR(\mathcal{I}) 
= 
\begin{cases}
\frac{1 + \epsilon}{1 - \epsilon \cdot (\lfloor n/2 \rfloor - 1)} & \text{if $\epsilon < \frac{1}{\lfloor n/2 \rfloor - 1}$,} \\
\infty & \text{otherwise.}
\end{cases}
$$
\end{theorem}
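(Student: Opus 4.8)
The plan is to obtain the upper bound directly from Theorem~\ref{thm:approx_single}, and the matching lower bound (and the unboundedness) by a Braess-type construction. For the upper bound, fix an instance $\mathcal{I} \in \mathcal{B}^m$, an $\epsilon$-approximate Nash flow $x$, an original Nash flow $z$, and an alternating path $\pi$ for $x$ and $z$. By the definition of the generalized Braess graph every maximal backward section of $\pi$ is a single arc, and — since the first and last arc of $\pi$ may be taken in $Z$ and each backward arc is immediately followed by a forward arc — $\pi$ has at most $\lfloor n/2 \rfloor - 1$ backward arcs, so $q(\pi) \le \lfloor n/2\rfloor - 1$. Since $\epsilon < 1/(\lfloor n/2\rfloor - 1) \le 1/q(\pi)$, Theorem~\ref{thm:approx_single} yields $\eSR(\mathcal{I}) \le (1+\epsilon)/(1 - \epsilon\, q(\pi)) \le (1+\epsilon)/(1 - \epsilon(\lfloor n/2\rfloor - 1))$, where the last step uses that $t \mapsto (1+\epsilon)/(1 - \epsilon t)$ is non-decreasing on $[0, 1/\epsilon)$; this is the claimed value whenever $\epsilon < 1/(\lfloor n/2\rfloor - 1)$.

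For the lower bound in the regime $\epsilon < 1/(\lfloor n/2\rfloor - 1)$, I would exhibit a family of instances on $\mathcal{B}^m$ with unit demand whose $\epsilon$-stability ratio approaches $(1+\epsilon)/(1 - \epsilon(\lfloor n/2\rfloor - 1))$, obtained by reverse-engineering the chain of inequalities in the proof of Theorem~\ref{thm:approx_single}. Concretely, I would choose latency functions so that: (i) the exact Nash flow $z$ uses only the "direct" $s,t$-paths of $\mathcal{B}^m$, all at common latency $1$, so $C(z) = 1$; (ii) there is an $\epsilon$-approximate Nash flow $x$ that saturates the "zig-zag" arcs, with each approximate-equilibrium inequality $C_{ij}(x) + X_{ij}(x) + D_{ij}(x) = (1+\epsilon)\big(C_{i,j-1}(x) + D_{ij}(x)\big)$ along the alternating path holding with equality; and (iii) the forward and backward arcs of the alternating path carry latency functions that are constant on the relevant range (or satisfy $x_e = z_e$ there), so that the passage from $l_e(x_e)$ to $l_e(z_e)$ in~(\ref{eq:z}) is tight as well. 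Tracing the telescoping argument then forces $l_{P^{\max}}(x) = (1+\epsilon)/(1 - \epsilon(\lfloor n/2\rfloor - 1))$ and hence $C(x)/C(z)$ equal to this value; introducing a small slack parameter on the strictly increasing pieces needed to make $z$ a genuine exact Nash flow and letting it tend to $0$ gives the supremum. For $m = 2$ this recovers the classical Braess graph with ratio $(1+\epsilon)/(1-\epsilon)$.

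For the regime $\epsilon \ge 1/(\lfloor n/2\rfloor - 1)$, i.e. $\epsilon q \ge 1$ with $q = \lfloor n/2\rfloor - 1$, the coefficient $1 - \epsilon q$ in the estimate of Theorem~\ref{thm:approx_single} is non-positive, so the argument imposes no bound on $l_{P^{\max}}(x)$. I would show this slack is real by taking the same zig-zag construction but scaling the latencies on the shortcut arcs by a factor $M \to \infty$ while keeping $C(z) = 1$, and checking that the $\epsilon$-approximate Nash inequalities remain satisfied precisely because the cascade of $(1+\epsilon)$-slacks over the $q$ backward arcs no longer admits a finite fixed point once $\epsilon q \ge 1$. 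Then $C(x)/C(z) = \Theta(M) \to \infty$, so $\sup_{\mathcal{I}\in\mathcal{B}^m}\eSR(\mathcal{I}) = \infty$.

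The main obstacle is the lower-bound construction in the finite regime: one must simultaneously pick continuous non-decreasing latency functions that make the cheap flow $z$ an \emph{exact} Nash flow, make the expensive flow $x$ satisfy \emph{all} $\epsilon$-approximate path-swap conditions of $\mathcal{B}^m$ (not only those along the alternating path used in the analysis), and arrange exactly the equalities that render the telescoping estimate of Theorem~\ref{thm:approx_single} tight — and then verify that these requirements stay mutually compatible as the slack parameter vanishes. The unbounded case should then be a comparatively routine modification of the same instances.
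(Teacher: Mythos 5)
Your upper bound is exactly the paper's: alternating paths in the generalized Braess graph have $q_i=1$ for every backward section, hence $q(\pi)\le\lfloor n/2\rfloor-1$, and Theorem~\ref{thm:approx_single} plus monotonicity of $t\mapsto(1+\epsilon)/(1-\epsilon t)$ finishes it. The gap is in the lower bound, which is the substantive content of this theorem beyond Theorem~\ref{thm:approx_single}: you correctly identify what the construction must achieve (exact Nash flow at common latency $1$ on the cross paths, approximate-equilibrium conditions tight on the paths through the middle arcs, and the passage from $x$ to $z$ lossless), and you correctly flag it as the main obstacle, but you never exhibit latency functions and never verify that \emph{all} path-swap conditions can be satisfied simultaneously. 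That simultaneous satisfiability is not automatic and is where the actual work lies.

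The paper's construction resolves it with one specific device you are missing: the outer arcs carry latencies that are \emph{linear in the index}, namely $l_{(s,v_j)}(g)=(m-j)\,y_m(g)$ and $l_{(w_j,t)}(g)=j\,y_m(g)$, with $y_m(1/m)=0$ and $y_m(1/(m-1))=\epsilon/(1-\epsilon(m-1))$ (for the finite regime), the cross arcs and the arcs in $E_3^m$ constant $1$, and the middle arcs $(v_j,w_j)$ constant $1$. Then under $z$ (routing $1/m$ on each cross path) every outer arc contributes $0$ and all flow-carrying paths have latency exactly $1$; under $x$ (routing $1/(m-1)$ on each path $(s,v_j,w_j,t)$) every such path accumulates $(m-j)+j=m$ copies of $y_m(1/(m-1))$, giving latency $(1+\epsilon)/(1-\epsilon(m-1))$, while every alternative path accumulates $m-1$ copies, giving latency $1/(1-\epsilon(m-1))$ --- so \emph{every} swap condition holds with equality simultaneously, for all $j$ at once, precisely because of the index-dependent coefficients. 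Note also that the bound is attained exactly, with no vanishing slack parameter. For $\epsilon\ge 1/(m-1)$ the paper reuses the same graph with the middle-arc latency $(1+\epsilon)+((m-1)\epsilon-1)\tau$, which stays $\ge 1$ for all $\tau\ge 0$ exactly when $\epsilon(m-1)\ge 1$ (this is where the threshold enters, rather than through a divergent ``fixed point'' as you suggest), and lets $\tau\to\infty$; your $M\to\infty$ scaling idea is in the right spirit but again needs this explicit instance to be checked against all swap conditions.
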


\paragraph{Non-symmetric matroid congestion games.}\label{sec:matroid}

In the previous sections, we considered (symmetric) network congestion games only. It is interesting to consider other combinatorial strategy sets as well. In this section we make a first step in this direction by focusing on the bases of matroids as strategies. 

A matroid congestion game is given by $\mathcal{J} = (E,(l_e)_{e \in E},(\mathcal{S}_i)_{i \in [k]},(r_i)_{i \in [k]})$, and matroids $\mathcal{M}_i = (E,\mathcal{I}_i)$ over the ground set $E$ for every $i \in [k]$.\footnote{A matroid over $E$ is given by a collection $\mathcal{I} \subseteq 2^E$ of subsets of $E$ (called \emph{independent sets}). The pair $\mathcal{M} = (E,\mathcal{I})$ is a \emph{matroid} if the following three properties hold: i) $\emptyset \in \mathcal{I}$; ii) If $A \in \mathcal{I}$ and $B \subseteq A$, then $B \in \mathcal{I}$. iii) If $A,B \in \mathcal{I}$ and $|A| > |B|$, then there exists an $a \in A \setminus B$ such that $B + a \in \mathcal{I}$.}  
The strategy set $\mathcal{S}_i$ consists of the \emph{bases} of the matroid $\mathcal{M}_i$, which are the independent sets of maximum size, e.g., spanning trees in an undirected graph. We refer the reader to Schrijver \cite{schrijver2003} for an extensive overview of matroid theory.


As for network congestion games, it can be shown that in general the $\epsilon$-stability ratio can be unbounded (see Theorem \ref{thm:matroid_unbounded} in the appendix); this also holds for general path deviations because the proof of Proposition~\ref{prop:equiv} in the appendix holds for arbitrary strategy sets.
However, if we consider path deviations induced by the sum of edge deviations (as in \cite{Nikolova2015,Kleer2016}), then we can obtain a more positive result for general matroids.

%

%


Recall that for every resource $e \in E$ we have a deviation function $\delta_e : \R_{\geq 0} \rightarrow \R_{\geq 0}$ satisfying $0 \leq \delta_e(x) \leq \beta l_e(x)$ for all $x \geq 0$. 
The deviation of a basis $B$
is then given by $\delta_B(f) = \sum_{e \in B} \delta_e(f_e)$. 


\begin{theorem}\label{thm:matroid}
Let $\mathcal{J} = (E,(l_e)_{e \in E},(\mathcal{S}_i)_{i \in [k]},(r_i)_{i \in [k]})$ be a matroid congestion game with homogeneous players. Let $\beta \ge 0$ be fixed and consider $\beta$-bounded basis deviations as defined above. 
Then the $\beta$-\deviation ratio is upper bounded by $\bDR(\mathcal{J}) \le 1 + \beta$. Further, this bound is tight already for $1$-uniform matroid congestion games. 
\end{theorem}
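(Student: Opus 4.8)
The plan is to prove the upper bound $\bDR(\mathcal{J})\le 1+\beta$ by reducing it to an inequality about minimum‑weight bases of the matroids $\mathcal{M}_i$, and to prove tightness on the $1$‑uniform matroid on two resources. For the upper bound, fix a $\beta$‑\perturbed Nash flow $x=f^\beta$ with edge deviations $(\delta_e)_{e}$, $0\le\delta_e\le\beta l_e$, and an original Nash flow $z=f^0$; put $c_e:=l_e+\delta_e$, so that $l_e\le c_e\le(1+\beta)l_e$ pointwise and $x$ is an exact Nash flow for the latencies $(c_e)$. By the \perturbed‑Nash conditions every flow‑carrying basis $B$ of commodity $i$ has deviated cost $\sum_{e\in B}c_e(x_e)$ equal to $Q_i:=\min_{B'}\sum_{e\in B'}c_e(x_e)$, the minimum being over all bases $B'$ of $\mathcal{M}_i$. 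Decomposing each $x^i$ into flow‑carrying bases and using $\sum_{e\in B}l_e(x_e)\le\sum_{e\in B}c_e(x_e)=Q_i$ gives $C(x)=\sum_e x_e l_e(x_e)\le\sum_i r_i Q_i$. Since in a Nash flow the flow‑carrying bases are exactly the minimum‑$l(z)$‑weight ones, $C(z)=\sum_i r_i L_i$ with $L_i:=\min_{B}\sum_{e\in B}l_e(z_e)$. Hence it suffices to prove $Q_i\le(1+\beta)L_i$ for every $i$.

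\emph{The matroid exchange step.} Fix $i$, write $\mathcal{M}=\mathcal{M}_i$, and call a resource \emph{over‑loaded} if $x_e>z_e$. Fix a minimum‑$l(z)$‑weight basis $B^z$ and a flow‑carrying basis $B^x$ of $x$. For an under‑loaded resource $e\in B^x$, monotonicity of $l_e$ gives $c_e(x_e)\le(1+\beta)l_e(x_e)\le(1+\beta)l_e(z_e)$. For over‑loaded resources I would exploit that, $x$ being a Nash flow for $(c_e)$, the flow‑carrying bases of $x$ are minimum‑$c(x)$‑weight, so no single‑element exchange out of $B^x$ decreases its $c(x)$‑weight. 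Combining this local optimality with the symmetric exchange property of matroids---which yields a bijection $\phi\colon B^x\to B^z$ that is the identity on $B^x\cap B^z$ and with $B^x-f+\phi(f)$ and $B^z-\phi(f)+f$ bases for all $f$---one seeks to bound, for each $f\in B^x$, its cost $c_f(x_f)$ by $(1+\beta)$ times the $l(z)$‑latency of the \emph{distinct} resource $\phi(f)\in B^z$ (for over‑loaded $f$ this invokes the \perturbed‑Nash exchange inequality of $x$ along the swap $f\leftrightarrow\phi(f)$). Summing over $f$ then gives $Q_i\le\sum_{f\in B^x}c_f(x_f)\le(1+\beta)\sum_{e\in B^z}l_e(z_e)=(1+\beta)L_i$, finishing the reduction.

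\emph{The main obstacle} is exactly this last step, namely the over‑loaded case and the charging. For series‑parallel networks Milchtaich's lemma \cite{Milchtaich2006} supplies a flow‑carrying path of $z$ lying entirely among the under‑loaded edges, so the whole estimate anchors to a single path of latency $C(z)$; but in a matroid there need be \emph{no} minimum‑$l(z)$‑weight basis disjoint from the over‑loaded resources (this already fails in graphic matroids with parallel edges), so over‑loaded resources must be handled inside the comparison basis $B^z$, and the charging must be arranged so that no resource of $B^z$ is counted twice---this is where the genuine matroid structure, rather than mere polyhedral structure of the strategy set, is used. It should help to first normalise the latencies using the fact that matroid congestion games are immune to the Braess paradox \cite{Fujishige2015}: truncating each $l_e$ at $l_e(x_e)$ on under‑loaded resources does not increase $C(z)$ and leaves $x$ a $\beta$‑\perturbed Nash flow, which makes the bookkeeping cleaner.

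\emph{Tightness.} Assume $\beta>0$ (for $\beta=0$ the statement is trivial) and take the $1$‑uniform matroid on two resources---equivalently, a network of two parallel arcs---with unit demand and homogeneous sensitivity, with $l_1\equiv 1$ and $\delta_1\equiv\beta$, and with $l_2$ strictly increasing from $l_2(0)=1+\epsilon$ to $l_2(1)=1+\beta$ for some $\epsilon\in(0,\beta)$, and $\delta_2\equiv 0$. The unique original Nash flow routes all demand on resource $1$, so $C(f^0)=1$; the unique $\beta$‑\perturbed Nash flow routes all demand on resource $2$ (there the deviated cost of both resources equals $1+\beta$, while any flow on resource $1$ would be improvable), so $C(f^\beta)=l_2(1)=1+\beta$. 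Hence $\bDR(\mathcal{J})\ge 1+\beta$, matching the upper bound.
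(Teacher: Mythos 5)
Your tightness construction is correct and essentially identical to the paper's (the two-parallel-arc instance reused from the heterogeneous lower bound). The upper bound, however, has a genuine gap, and you name it yourself: the over-loaded case. Your proposed charging via a symmetric-exchange bijection $\phi\colon B^x\to B^z$ does not go through, because the Nash condition for $x$ along the swap $f\leftrightarrow\phi(f)$ only yields $c_f(x_f)\le c_{\phi(f)}(x_{\phi(f)})\le(1+\beta)l_{\phi(f)}(x_{\phi(f)})$, and to continue to $(1+\beta)l_{\phi(f)}(z_{\phi(f)})$ you would need $\phi(f)$ to be under-loaded --- which nothing guarantees. Your intermediate target $Q_i\le(1+\beta)L_i$ is likewise not obviously attainable: evaluating the minimum-$l(z)$-weight basis $B^z$ at $x$ runs into over-loaded resources inside $B^z$, where $l_e(x_e)>l_e(z_e)$, and patching this with a bound like $l_e(x_e)\le(1+\beta)l_e(z_e)$ costs an extra factor $(1+\beta)$.

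The paper closes the gap with two ingredients you are missing. First, a pairing lemma in the spirit of Fujishige et al.\ (their Lemma 3.2): for every over-loaded resource $e$ there exists an \emph{under-loaded} resource $f$ satisfying \emph{both} $l_e(x_e)+\delta_e(x_e)\le l_f(x_f)+\delta_f(x_f)$ \emph{and} $l_f(z_f)\le l_e(z_e)$; chaining these gives $l_e(x_e)\le(1+\beta)l_e(z_e)$ for every over-loaded $e$ with only one factor $(1+\beta)$. The point is that this lemma uses the Nash conditions of $x$ and of $z$ \emph{simultaneously} to select the exchange partner, rather than fixing a bijection in advance. Second, instead of a basis-by-basis charging, the paper uses a variational inequality: since $x$ is an exact Nash flow for $c_e=l_e+\delta_e$, one has $\sum_e x_e c_e(x_e)\le\sum_e z_e c_e(x_e)$, which after splitting into over- and under-loaded resources and using $0\le\delta_e\le\beta l_e$ gives $\sum_{e:x_e>z_e}(x_e-z_e)l_e(x_e)\le(1+\beta)\sum_{e:z_e\ge x_e}(z_e-x_e)l_e(x_e)$. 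Combining this aggregate inequality with the pointwise bound from the pairing lemma yields $C(x)\le(1+\beta)C(z)$ directly, bypassing the per-commodity quantities $Q_i$ and $L_i$ altogether. Your instinct to normalize latencies via immunity to the Braess paradox is not needed in this route.
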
 

\noindent \textbf{Acknowledgements.} We thank the anonymous referees for their very useful comments, and one reviewer for pointing us to Lemma 1 \cite{Fleischer2005}.

 \bibliographystyle{abbrv}
\bibliography{references}
\newpage
\appendix

\newpage
\section{Omitted material of Section \ref{sec:het}}

\begin{rtheorem}{Proposition}{\ref{prop:inclusion}}
Let $\mathcal{I}$ be a congestion game with heterogeneous players. 
If $f$ is a $\beta$-\perturbed Nash flow for $\mathcal{I}$, then $f$ is an $\epsilon$-approximate Nash flow for $\mathcal{I}$ with $\epsilon_{ij} = \beta \gamma_{ij}$ for all $i \in [k]$ and $j \in [h_i]$ (for the same demand distribution $r$).
\end{rtheorem}
\begin{proof}
Let $f$ be a $\beta$-deviated Nash flow for some set of path deviations $\delta \in \Delta(0,\beta)$. For $P \in \mathcal{S}$, we have
$$
l_P(f) \leq l_P + \gamma_{ij}\delta_P(f) \leq l_{P'}(f) + \gamma_{ij} \delta_{P'}(f)\leq (1+ \gamma_{ij} \beta) l_{P'}(f) 
$$
where we use the non-negativity of $\delta_P(f)$ in the first inequality, the Nash condition for $f$ in the second inequality, and the fact that $\delta_{P'}(f) \leq \beta l_{P'}(f)$ in the third inequality. By definition, it now holds that $f$ is also a $\epsilon$-approximate equilibrium (for $\epsilon = \beta \gamma$). \qed
\end{proof}

\begin{rtheorem}{Theorem}{\ref{thm:hetero}}
Let $\mathcal{I}$ be a network congestion game on a series-parallel graph with heterogeneous players, demand distribution $r = (r_i)_{i \in [h]}$ normalized to $1$, i.e.,  $\sum_{j \in [h]} r_i = 1$, and sensitivity distribution $\gamma = (\gamma_i)_{i \in [h]}$, with $\gamma_1 < \gamma_2 < \dots < \gamma_h$. 
Let $\beta \ge 0$ be fixed and define $\epsilon = (\beta \gamma_i)_{i \in [h]}$. Then the $\epsilon$-stability ratio and the $\beta$-\deviation ratio are bounded by: 
\begin{align}
\eSR(\mathcal{I})  \le 1 + \beta \sum_{j = 1}^h r_j\gamma_j \quad\text{and}\quad
\bDR(\mathcal{I})  \le 1 + \beta \cdot \max_{j \in [h]} \bigg\{ \gamma_j \bigg( \sum_{p = j}^h r_p \bigg) \bigg\}.
\label{eq:comparison_epsilon}
\end{align}
Further, both bounds are tight for all distributions $r$ and $\gamma$.
\end{rtheorem}

%
%

\subsection{Proof of statement for $\epsilon$-$SR(\mathcal{I})$.}
\begin{proof}[Theorem \ref{thm:hetero}, $\epsilon$-stability ratio]
The statement for $\epsilon$-approximate equilibria can be proven almost similar as in Kleer and Sch\"afer \cite{Kleer2016} (the bound there is used for path deviations, but the proof extends directly to approximate equilibria). For completeness, we give the argument here.

For $j \in [k]$, let $\bar{P}_j$ be a path maximizing $l_P(x)$ over all flow-carrying paths $P \in \mathcal{P}$ of type $j$. Moreover, there exists a path $\pi$ such that $x_a \leq z_a$ and $z_a > 0$ for all $a \in \pi$ (see, e.g., Milchtaich \cite{Milchtaich2006}). We then have (this is also reminiscent of an argument by Lianeas et al. \cite{Lianeas2016}):
$$
l_{\bar{P}_j}(x) \leq (1+\beta \gamma_j) l_{\pi}(x) = (1 + \beta \gamma_j)\sum_{a \in \pi} l_a(x_a).
$$
Note that, by definition of the alternating path $\pi$, we have $x_a \leq z_a$ for all $a \in \pi$. Continuing with the estimate, we find $l_{\bar{P}_j}(x) \leq (1 + \beta \gamma_j)\sum_{a \in \pi} l_a(z_a)$ and thus
$$
C(x) \leq \sum_{j \in [h]} r_j l_{\bar{P}_j}(x) \leq  \sum_{j \in [h]} r_j (1 + \beta \gamma_j)\sum_{a \in \pi} l_a(z_a) = C(z) \bigg(\sum_{j \in [h]} r_j (1 + \beta \gamma_j)\bigg) 
$$
Since $\sum_{j \in [h]} r_j = 1$, we get the desired result. Note that we use $C(z) = \sum_{a \in \pi} l_a(z_a)$, which is true because there exists a flow-decomposition of $z$ in which $\pi$ is flow-carrying (here we use $z_a > 0$ for all $a \in \pi$).

 Tightness follows by considering an instance with arc set $\{0,1,\dots,h\}$ where the zeroth arc has latency $l_0(y) = 1$ and the arcs $j \in \{1,\dots,h\}$ have latency $l_j(y) = 1 + \beta \gamma_j$. An original Nash flow is given by $f^0 = (z_0,z_1,\dots,z_h) = (1,0,\dots,0)$, and an $\epsilon$-approximate Nash flow is given by $f^\epsilon = (x_0,x_1,\dots,x_h) = (0, r_1, r_2, \dots, r_h)$.\qed
\end{proof}

\subsection{Missing arguments for proof of $\beta$-deviation ratio.}

\begin{rtheorem}{Lemma}{\ref{lem:max_seq}}
Let $0 \leq \tau_{k-1} \leq \dots \leq \tau_1 \leq \tau_0$ and $c_i \geq 0$ for $i = 1,\dots,k$ be given. Then
$$
c_1 \tau_0 + \sum_{i = 1}^{k-1} (c_{i+1} - c_i)\tau_i \leq \tau_0 \cdot \max_{i=1,\dots,k} \{c_i\}.
$$
\end{rtheorem}
\begin{proof}
The statement is clearly true for $k = 1$. Now suppose the statement is true for some $k \in \N$. We will prove the statement for $k + 1$. \\
\textbf{Case 1: $c_{k+1} - c_{k} \leq 0$.} Then we have 
\begin{eqnarray}
c_1 \tau_0 + \sum_{i = 1}^{k} (c_{i+1} - c_i)\tau_i &\leq & c_1 \tau_0 + \sum_{i = 1}^{k-1} (c_{i+1} - c_i)\tau_i \text{ \ \ \ \  (using $\tau_k \geq 0$) }  \nonumber \\
&\leq & \tau_0 \cdot \max_{i=1,\dots,k} \{c_i\} \text{ \ \ \ \  (using induction hypothesis) } \nonumber \\
&\leq & \tau_0 \cdot \max_{i=1,\dots,k+1} \{c_i\}\text{ \ \ \ \  (using non-negativity of $\tau_0$) } \nonumber 
\end{eqnarray}\\
\textbf{Case 2: $c_{k+1} - c_{k} > 0$.}
\begin{eqnarray}
c_1 \tau_0 + \sum_{i = 1}^{k} (c_{i+1} - c_i)\tau_i &= & c_1 \tau_0 + \bigg[\sum_{i = 1}^{k-1} (c_{i+1} - c_i)\tau_i \bigg] + (c_{k+1}-c_k)\tau_{k}  \nonumber \\
&\leq & c_1 \tau_0 + \bigg[\sum_{i = 1}^{k-1} (c_{i+1} - c_i)\tau_i \bigg] + (c_{k+1}-c_k)\tau_{k-1} \text{ \ \  ($\tau_k \leq \tau_{k-1}$) }  \nonumber \\
&=& c_1 \tau_0 + \bigg[\sum_{i = 1}^{k-2} (c_{i+1} - c_i)\tau_i \bigg] + (c_{k+1}-c_{k-1})\tau_{k-1}  \nonumber \\
&\leq & \tau_0 \cdot \max_{i=1,\dots,k-2,k-1,k+1} \{c_i\} \text{ \ \ \ \  (using induction hypothesis) } \nonumber \\
&\leq & \tau_0 \cdot \max_{i=1,\dots,k+1} \{c_i\}\text{ \ \ \ \  (using non-negativity of $\tau_0$) } \nonumber 
\end{eqnarray}
Note that we apply the induction hypothesis with the set $\{c_1,\dots,c_{k-1},c_{k+1}\}$ of size $k$. \qed
\end{proof}

\subsubsection{Induction step to prove inequality (\ref{eq:induction}) in proof of Theorem \ref{thm:hetero}:}

The case $j = 1$ is precisely (\ref{eq:nash1}). Now suppose it holds for some $j$, then we have
\begin{eqnarray}
l_{P_{j+1}}(x) &\leq& l_{P_j}(x) + \gamma_{j+1}\delta_{P_j}(x) - \gamma_{j+1}\delta_{P_{j+1}}(x)  \ \ \ \ \text{(Nash condition for path $P_{j+1}$)} \nonumber \\
&\leq& l_{P_0}(x) + \gamma_1 \delta_{P_0}(x) + \bigg[\sum_{g = 1}^{j-1} (\gamma_{g+1} - \gamma_g)\delta_{P_g}(x)\bigg] - \gamma_j \delta_{P_j}(x) \nonumber \\
& & + \ \gamma_{j+1}\delta_{P_j}(x) - \gamma_{j+1}\delta_{P_{j+1}}(x) \nonumber \ \ \ \text{(induction hypothesis)} \nonumber \\
& = &l_{P_0}(x) + \gamma_1 \delta_{P_0}(x) + \bigg[\sum_{g = 1}^{j} (\gamma_{g+1} - \gamma_g)\delta_{P_g}(x)\bigg] - \gamma_{j+1} \delta_{P_{j+1}}(x), \nonumber
\end{eqnarray}
which shows the result for $j + 1$.\qed

\subsection{Bounded deviation model for $(0,\beta)$-path deviations and general sensitivity density distributions (single-commodity case).}
We use the  section  to explain the generalization of the bounds in Theorem \ref{thm:hetero} to more general sensitivity density distributions (see Corollary \ref{cor:het} further on). For sake of completeness, we adjust the definitions given in the description of the bounded deviation model (see Section \ref{sec:pre}).
We are given a single-commodity instance  $\mathcal{I} = (E,(l_e)_{e \in E},\mathcal{S})$ and $\beta \geq 0$ fixed. We consider the set
$$
\Delta(0,\beta) = \{ \delta =  (\delta_P)_{P \in \mathcal{S}} \ \big| \ 0 \leq \delta_P(f) \leq \beta l_P(f) \text{ for all } f \in \mathcal{F}(\mathcal{S}) \}
$$
of $(0,\beta)$-path deviation vectors $\delta= (\delta_P)_{P \in \mathcal{S}}$ where $\delta_P : \mathcal{F}(\mathcal{S}) \rightarrow \R_{\geq0}$ for all $P \in \mathcal{S}$. Moreover we have a Lesbesgue integrable (or just continuous) density function $\psi : \R_{\geq 0} \rightarrow \R_{\geq 0}$ over the the demand, which is normalized to $1$, i.e., $\int_0^\infty \phi(y)dy = 1$.  For feasible flow $f$, we have an indicator function 
$$
\mathbb{I} : \mathcal{S} \times [0,\infty) \rightarrow \{0,1\}
$$ 
which is $1$ if sensitivity $\gamma \in [0,\infty)$ is present on path $P$ in flow $f$, and $0$ otherwise. 

Define the \emph{deviated latency} of a path $P \in \mathcal{S}_i$ for sensitivity class $j \in [h_i]$ as $q_{P}^j(f) = l_P(f) + \gamma_{ij} \delta_P(f)$. We say that a flow $f$ is a \emph{$\beta$-\perturbed Nash flow} if there exist some $\beta$-bounded path deviations $\delta \in \Delta(\beta)$ such that
\begin{equation}
\forall \gamma [0,\infty), \forall s \in \mathcal{S} \text{ with }  \mathbb{I}(P,\gamma) = 1: \qquad q_{P}^j(f) \leq q_{P'}^j(f)  \ \ \forall P' \in \mathcal{S}.
\label{eq:nash}
\end{equation}
We define the \emph{$\beta$-\deviation ratio} $\bDR(\mathcal{I})$ as the maximum ratio $C(f^\beta)/C(f^0)$ of an $\beta$-\perturbed Nash flow $f^\beta$ and an original Nash flow $f^0$.


\medskip

\noindent The result of Theorem \ref{thm:hetero} for general sensitivity demand functions (as mentioned at the end of Section \ref{sec:het}) can now be stated as follows. 

\begin{corollary}\label{cor:het}

Let $\mathcal{I}$ be a network congestion game on a series-parallel graph with heterogeneous players, given by a Lebesgue integrable demand density function $\psi : \R_{\geq 0} \rightarrow \R_{\geq 0}$. Let $\beta \geq 0$ be fixed and let $\epsilon = (\beta\gamma_i)_{i \in [h]}$.  Then the $\epsilon$-stability ratio and the $\beta$-\deviation ratio are bounded by: 
\begin{align}
\eSR(\mathcal{I})  \le 1 + \beta \int_{0}^\infty y\cdot \psi(y)dy \quad\text{and}\quad
\bDR(\mathcal{I})  \le 1 + \beta  \sup_{t \in \R_{\geq 0}} \bigg\{ t \cdot \int_t^\infty \psi(y) dy \bigg\}.
\label{eq:comparison_epsilon_app}
\end{align}
Further, both bounds are asymptotically tight for all distributions $r$ and $\gamma$.

%
\end{corollary}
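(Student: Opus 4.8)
The plan is to obtain both upper bounds by transcribing the two upper-bound arguments in the proof of Theorem~\ref{thm:hetero} from a finite set of sensitivity classes to the continuum of sensitivities described by $\psi$, and to obtain asymptotic tightness by running the two explicit gadgets from that proof with a discretized (resp.\ parametrized) version of $\psi$. For the $\epsilon$-stability ratio the argument is short. Let $x=f^\epsilon$, $z=f^0$, and let $\pi$ be a path with $x_a\le z_a$ and $z_a>0$ on all its arcs (Milchtaich~\cite{Milchtaich2006}), so that $l_\pi(x)\le l_\pi(z)=C(z)$. Decomposing $x$ by type, $C(x)=\sum_P x_P\,l_P(x)=\int_0^\infty\bigl(\sum_P x_{P,\gamma}\,l_P(x)\bigr)\,d\gamma$ with $\sum_P x_{P,\gamma}=\psi(\gamma)$, and every path carrying type-$\gamma$ flow has latency at most $(1+\beta\gamma)\,l_\pi(x)\le(1+\beta\gamma)C(z)$ by the $\epsilon$-approximate Nash condition; hence the integrand is at most $\psi(\gamma)(1+\beta\gamma)C(z)$, and integrating against $\psi$ (which has total mass $1$) yields $\eSR(\mathcal{I})\le 1+\beta\int_0^\infty y\,\psi(y)\,dy$. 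No measurability subtlety arises, as the integrand is dominated by the explicitly measurable function $\gamma\mapsto\psi(\gamma)(1+\beta\gamma)C(z)$.

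For the $\beta$-\deviation ratio I would keep the structure of the corresponding part of Theorem~\ref{thm:hetero}. First run the same Braess-immunity reduction so that all latency functions on the arcs of $Z$ are constant, fix a path $P_0$ using only arcs of $Z$ with $l_{P_0}(x)=l_{P_0}(z)=C(z)$, and for each present sensitivity $\gamma$ pick a maximum-latency flow-carrying path $P_\gamma$ of type $\gamma$; since there are finitely many paths, $\gamma\mapsto P_\gamma$ and $\gamma\mapsto l_{P_\gamma}(x)$ can be chosen piecewise constant, hence measurable. The pairwise Nash inequalities give, verbatim, that $\tau(\gamma):=\delta_{P_\gamma}(x)$ is non-increasing with $\tau(0^{+})\le\delta_{P_0}(x)$. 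Writing the telescoped form of (\ref{eq:induction}) as $l_{P_j}(x)\le l_{P_0}(x)+\sum_{g=1}^{j}\gamma_g\bigl(\delta_{P_{g-1}}(x)-\delta_{P_g}(x)\bigr)$, its continuum analogue is
\[
l_{P_\gamma}(x)\ \le\ l_{P_0}(x)+\int_{(0,\gamma]}u\,\bigl(-d\tau(u)\bigr),
\]
which I would establish by the same telescoping over a fine partition of the present sensitivities (a Riemann--Stieltjes limit), or by passing to the limit in the discrete chain. Decomposing $C(x)$ by type as before and swapping the order of integration by Tonelli (all integrands are non-negative) gives
\[
C(x)\ \le\ l_{P_0}(x)+\int_{(0,\infty)}u\,\bar\Psi(u)\,\bigl(-d\tau(u)\bigr),\qquad\bar\Psi(u):=\int_{u}^{\infty}\psi(\gamma)\,d\gamma,
\]
and the continuum analogue of Lemma~\ref{lem:max_seq} is then immediate: $\int_{(0,\infty)}c(u)\,(-d\tau(u))\le\tau(0^{+})\,\sup_u c(u)$ for non-increasing $\tau\ge0$ and $c\ge0$. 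Applying it with $c(u)=u\,\bar\Psi(u)$ and using $\tau(0^{+})\le\delta_{P_0}(x)\le\beta\,l_{P_0}(x)=\beta\,C(z)$ yields $\bDR(\mathcal{I})\le 1+\beta\sup_{t\ge0}\{t\int_t^\infty\psi(y)\,dy\}$. Keeping the estimate in Stieltjes form rather than integrating by parts first is what avoids a boundary term at $+\infty$; the remaining care concerns densities with unbounded support or infinite mean, handled by truncation (and the $\epsilon$-stability bound is vacuously true when $\int y\,\psi=\infty$).

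For asymptotic tightness I would reuse the two gadgets from the proof of Theorem~\ref{thm:hetero}. For the deviation ratio, for each $t>0$ take the two-arc instance with $(l_1,\delta_1)=(1,\beta)$ and $\delta_2\equiv0$, $l_2$ strictly increasing with $l_2(0)\in(1,1+\beta t)$ and $l_2(\bar\Psi(t))=1+\beta t$: the unique original Nash flow has cost $1$, while routing types with $\gamma\ge t$ (mass $\bar\Psi(t)$) on arc~$2$ and the remaining types on arc~$1$ is a $\beta$-\perturbed Nash flow of cost $1+\beta t\,\bar\Psi(t)$; letting $t$ run along a sequence approaching $\sup_{t}\{t\,\bar\Psi(t)\}$ matches the bound. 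For the stability ratio, take $n+1$ parallel arcs, arc~$0$ with constant latency $1$ and arc~$i$ ($1\le i\le n$) with constant latency $1+\beta\inf I_i$, for a partition $I_1,\dots,I_n$ of the support of $\psi$ by quantiles; routing type $\gamma\in I_i$ on arc~$i$ is an $\epsilon$-approximate Nash flow of cost $1+\beta\sum_i \mathrm{mass}(I_i)\,\inf I_i$, which tends to $1+\beta\int_0^\infty y\,\psi(y)\,dy$ as the partition refines. The main obstacle is the continuum version of the inductive inequality~(\ref{eq:induction}) and the accompanying bookkeeping — a measurable choice of $\gamma\mapsto P_\gamma$, the monotonicity of $\tau$, and the degenerate unbounded-support cases — since everything else is a direct transcription of the discrete proofs.
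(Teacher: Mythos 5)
Your upper-bound arguments are correct, but they take a genuinely different route from the paper. The paper does not redo the analysis in the continuum at all: it reduces to Theorem~\ref{thm:hetero} by collapsing, for each of the finitely many paths $P$, all flow on $P$ to the single sensitivity $\gamma_P^*=\inf\{\gamma:\mathbb{I}(P,\gamma)=1\}$. This induces a finite discrete sensitivity distribution, the discrete bounds of Theorem~\ref{thm:hetero} apply verbatim, and one only has to observe that $\sum_j r_j\gamma_j$ and $\max_j\gamma_j\sum_{p\ge j}r_p$ for the induced distribution are dominated by $\int y\,\psi(y)\,dy$ and $\sup_t t\int_t^\infty\psi$, respectively. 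That reduction is a few lines and sidesteps every piece of bookkeeping you flag as the ``main obstacle'': the measurable selection of $\gamma\mapsto P_\gamma$ (your claim that it is ``piecewise constant, hence measurable'' is actually an overclaim, since the level sets $\{\gamma:\mathbb{I}(P,\gamma)=1\}$ are a priori arbitrary), the limit passage in the continuum analogue of (\ref{eq:induction}), and the Tonelli swap. What your direct transcription buys in exchange is a cleaner conceptual picture -- $-d\tau$ is a positive measure of total mass at most $\beta C(z)$, and the Stieltjes form $\int c\,(-d\tau)\le\tau(0^+)\sup c$ with $c(u)=u\bar\Psi(u)$ is exactly the right continuum version of Lemma~\ref{lem:max_seq} -- and it would extend to sensitivity distributions that are not absolutely continuous, which the paper's statement does not cover. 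The stability-ratio bound and both tightness gadgets (the two-arc instance parametrized by $t$, and the quantile-discretized parallel-arc instance) coincide with the paper's; the paper additionally truncates the tail at a level $\alpha$ with residual mass $A\to 0$ in the stability construction, matching your refinement of the partition.
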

\begin{proof}
We first show that we can reduce to a discrete instance as considered in Theorem \ref{thm:hetero}. For every path $P \in \mathcal{S}$ we can set the sensitivity of all the flow on that path to 
$$
\gamma_P^* = \inf\{ \gamma : \mathbb{I}(P,\gamma) = 1 \}.
$$
In particular, this induces a demand distribution over a discrete (finite) valued sensitivity population (here we implicitly use the continuity of the latency functions). The upper bounds in Theorem \ref{thm:hetero} for the resulting sensitivity distributions will never be worse than the quantities in (\ref{eq:comparison_epsilon_app}). This concludes the proofs of the upper bounds (since the result now follows from the proof of Theorem \ref{thm:hetero}).

For tightness of the $\beta$-deviation ratio, fix $t \in [0,\infty)$, and let $T = \int_t^\infty \phi(y)dy$.  Consider the following instance on two arcs. We take $(l_1(y),\delta_1(y)) = (1,\beta)$ and take  $(l_2(y),\delta_2(y))$ with $\delta_2(y) = 0$ and $l_2(y)$ a strictly increasing function satisfying $l_2(0) = 1 + \epsilon'$ and $l_2(T) = 1 + t\cdot \beta$, where $\epsilon' < t \cdot \beta$. The (unique) original Nash flow is given by $z = (z_1,z_2) = (1,0)$ with $C(z) = 1$, and the (unique) deviated Nash flow $x$ is given by $x = (x_1,x_2) = (1-T,T)$ with $C(x) = 1 + \beta \cdot \gamma \cdot T$. Since this construction holds for all $t \in [0,\infty)$, the bound in (\ref{eq:comparison_epsilon_app}) is asymptotically tight. This holds since we can get arbitrarily close to the supremum if it is finite, and otherwise we can create a sequence of instances of which its ratio goes approaches infinity in case the supremum is not finite.

For tightness of the $\epsilon$-stability ratio, we create a discretized version of the construction in the proof of Theorem \ref{thm:hetero}. Fix some arbitrary $A> 0$. Choose $\alpha$ large enough so that 
$$
A = \int_\alpha^\infty \psi(y) dy
$$ 
For the density under $\alpha$, we discretize the interval $[0,\alpha]$ into the intervals $[k\cdot \epsilon', (k+1)\epsilon']$ where $k = 0,\dots,q$ for some $q = \alpha/\epsilon'$ (without loss of generality $q$ can be assumed to be integral, and it can be chosen as large as desired). We then consider the discrete demand distribution $r = (r_1,\dots,r_q,A)$, where 
$$
r_i = \int_{k\epsilon'}^{(k+1)\epsilon'} \psi(y) dy.
$$
We then create an instance with $q+1$ arcs, with latency functions $l_k(y) = (1 + k\epsilon')$ for $k = 1,\dots,q$ and $l_{q+1}(y) = 1 + A$. We now can use the same construction as in the proof of statement (\ref{eq:comparison_epsilon}), and by sending $A,\epsilon' \rightarrow 0$ we can get arbitrarily close to the value $1 + \beta \int_{0}^\infty y\cdot \psi(y)dy$. Note that the number of arcs in the instance goes to infinity as $A,\epsilon' \rightarrow 0$. \qed
\end{proof}

\section{Omitted material of Section \ref{sec:approx}}

\begin{rtheorem}{Proposition}{\ref{prop:equiv}}
Let $\mathcal{I}$ be a single-commodity congestion game with homogeneous players. $f$ is an $\epsilon$-approximate Nash flow for $\mathcal{I}$ if and only if $f$ is a $\beta$-\perturbed Nash flow for $\mathcal{I}$ with $\epsilon = \beta\gamma$. 
\end{rtheorem}
\begin{proof}
The first part of the proof is a special case of the proof of Proposition \ref{prop:inclusion}. Let $f$ be a $\beta$-\perturbed Nash flow for some set of path deviations $\delta \in \Delta(0,\beta)$. For $P \in \mathcal{S}$, we have
$$
l_P(f) \leq l_P + \gamma \delta_P(f) \leq l_{P'}(f) + \gamma \delta_{P'}(f)\leq (1+ \gamma \beta) l_{P'}(f) 
$$
where we use the non-negativity of $\delta_P(f)$ in the first inequality, the Nash condition for $f$ in the second inequality, and the fact that $\delta_{P'}(f) \leq \beta l_{P'}(f)$ in the third inequality. By definition, it now holds that $f$ is also a $\epsilon$-approximate equilibrium (for $\epsilon = \beta \gamma$).

Reversely, let $f$ be a $\epsilon$-approximate Nash flow (w.l.o.g., we can take $\gamma = 1$, so that $\beta = \epsilon$). We show that there exist $(0,\beta)$-path deviations $\delta_P$ such that $f$ is inducible with respect to these path deviations. Let $P_1,\dots,P_k$ be the set of flow-carrying paths under $f$, and assume without loss of generality that $l_{P_1}(f) \leq l_{P_2}(f) \leq \dots \leq l_{P_k}(f)$. We define $\delta_{P_i}(f) = l_{P_k}(f) - l_{P_i}(f)$ for $i = 1,\dots,k$. Using the Nash condition for the path $P_k$, we find
$$
\delta_{P_i}(f) = l_{P_k}(f) - l_{P_i}(f) \leq (1+ \beta)l_{P_i}(f) - l_{P_i}(f) = \beta l_{P_i}(f)
$$
which shows that these path deviations are feasible. Moreover, we take $\delta_{Q}(f) = \beta l_Q(f)$ for all the paths $Q \in \mathcal{P} \setminus \{P_1,\dots,P_k\}$ which are not flow-carrying under $f$. Now, let $i \in \{1,\dots,k\}$ be fixed. Then for any $j \in \{1,\dots,k\}$, we have
$$
l_{P_i}(f) + \delta_{P_i}(f) = l_{P_k}(f) = l_{P_j}(f) + \delta_{P_j}(f)
$$
and for any $Q \in \mathcal{P} \setminus \{P_1,\dots,P_k\}$, we have
$$
l_{P_i}(f) + \delta_{P_i}(f) = l_{P_k}(f) \leq (1+\beta)l_Q(f) = l_Q(f) + \delta_Q(f),
$$
using the Nash condition for the path $P_k$ and the definition of $\delta_Q(f)$. This shows that $f$ is indeed a $\beta$-deviated Nash flow.
\qed
\end{proof}


\subsubsection{Generalized Braess graph.} The \emph{$m$-th (generalized) Braess graph} $G^m = (V^m, A^m)$ is defined by $V^m = \{s,v_1,\dots,v_{m-1},w_1,\dots,w_{m-1},t\}$ and $A^m$ as the union of three sets: $E^m_1 = \{(s,v_j),(v_j,w_j),(w_j,t): 1 \leq j \leq m-1\}$, $E^m_2 = \{(v_j,w_{j-1}) : 2 \leq j \leq m\}$ and $E^m_3 = \{(v_1,t)\cup\{(s,w_{m-1}\}\}$ (see Figure \ref{fig:braess_5} for an example). \bigskip

\begin{rtheorem}{Theorem}{\ref{thm:braess_tight}}
Let $n = 2m$ be fixed and let $\mathcal{B}^m$ be the set of all instances on the generalized Braess graph with $n$ nodes. Then
$$
\sup_{\mathcal{I} \in \mathcal{B}^m} \eSR(\mathcal{I}) 
= 
\begin{cases}
\frac{1 + \epsilon}{1 - \epsilon \cdot (\lfloor n/2 \rfloor - 1)} & \text{if $\epsilon < \frac{1}{\lfloor n/2 \rfloor - 1}$,} \\
\infty & \text{otherwise.}
\end{cases}
$$
\end{rtheorem}

The proof of the upper bound for $\epsilon <  1/(\lfloor n/2 \rfloor - 1)$ follows from the discussion in the main text of Section \ref{sec:approx}. That is, we either have a path consisting solely of $Z$-edges (in which case we get a bound of $1 + \epsilon$ which follows from similar arguments as the proof of Theorem \ref{thm:hetero}) or we have an alternating path $q_i = 1$ for all sections of consecutive $X$-edges in the alternating path (in Figure \ref{fig:braess_5}, these are the edges with latency $T$). That is, every such section consists of one edge in $X$.

\begin{proof}[Lower bound for $\epsilon \geq  1/(\lfloor n/2 \rfloor - 1)$]
Let $\tau \geq 0$ be a fixed constant. Let $y_m : \R_{\geq 0} \rightarrow \R_{\geq 0}$ be a non-decreasing, continuous function with $y_m(1/m) = 0$ and $y_m(1/(m-1)) = \tau$.
We define 
$$
l^{m}_a(g) = \left\{ \begin{array}{ll}
(m - j)\cdot y_m(g) & \text{ for } a \in \{(s,v_j) : 1 \leq j \leq m-1\}\\
j \cdot y_m(g) & \text{ for } a \in \{(w_j,t) : 1 \leq j \leq m-1\}\\
1 & \text{ for } a \in \{(v_i,w_{i-1}): 2 \leq i \leq m-1\} \cup E_3\\
(1+\epsilon) + ((m-1)\epsilon - 1)\tau & \text{ for } a \in \{(v_i,w_i): 1 \leq i \leq m-1\} 
\end{array}\right.
$$
Note that $l_a^m$ is non-negative for all $a \in A$. This is clear for the first three cases, and the last case follows from the assumption $\epsilon \geq 1/(m-1)$.

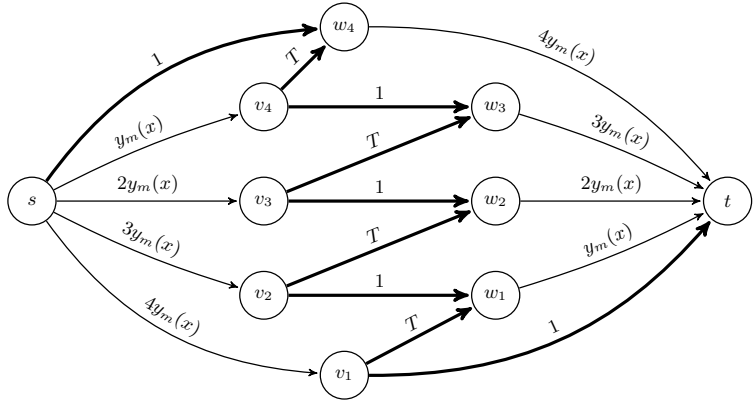
\begin{figure}[h!]
\centering
\scalebox{0.8}{
\begin{tikzpicture}[
  ->,
  >=stealth',
  shorten >=1pt,
  auto,
  semithick,
  every state/.style={circle,fill=white,radius=0.2cm,text=black},
]
r = 1
\begin{scope}
  \node[state]  (s)               					 {$s$};
  \node[state]  (v3) [right=3cm of s] 				 {$v_3$};
  \node[state]  (v4) [above=0.75cm of v3]  			 {$v_4$};
  \node[state]  (v2) [below=0.75cm of v3]			     {$v_2$};
  \node[state]  (v1) [below right=0.75cm and 0.75cm of v2] {$v_1$};
  \node[state]  (w4) [above right=0.75cm and 0.75cm of v4] {$w_4$};
  \node[state]  (w3) [right=3cm of v4] 				 {$w_3$};
  \node[state]  (w2) [right=3cm of v3] 				 {$w_2$};       
  \node[state]  (w1) [right=3cm of v2] 				 {$w_1$}; 
  \node[state]  (t)  [right=3cm of w2] 				 {$t$};
  
\path[every node/.style={sloped,anchor=south,auto=false}]
(s) edge[line width=1.5pt, bend left=25] 	node {$1$} (w4)            
(s) edge[bend left=5]  						node {$y_m(x)$} (v4)
(s) edge               						node {$2y_m(x)$} (v3)
(s) edge[bend right=5] 						node {$3y_m(x)$} (v2)
(s) edge[bend right=25]						node {$4y_m(x)$} (v1)
(v4) edge[line width=1.5pt]             	node {$T$} (w4)
(v4) edge[line width=1.5pt]         		node {$1$} (w3)
(v3) edge[line width=1.5pt]               node {$T$} (w3)
(v3) edge[line width=1.5pt]               node {$1$} (w2)
(v2) edge[line width=1.5pt]               node {$T$} (w2)
(v2) edge[line width=1.5pt]               node {$1$} (w1)
(v1) edge[line width=1.5pt]               node {$T$} (w1)
(w4) edge[bend left=25]node {$4y_m(x)$} (t)            
(w3) edge[bend left=5] node {$3y_m(x)$} (t)
(w2) edge              node {$2y_m(x)$} (t)
(w1) edge[bend right=5]node {$y_m(x)$} (t)
(v1) edge[line width=1.5pt, bend right=25]node {$1$} (t);
        
\end{scope}
\end{tikzpicture}}
\caption{The fifth Braess graph with $l_a^5$ on the arcs as defined in the proof of Theorem \ref{thm:braess_tight}, with $T = (1+\epsilon) + ((m-1)\epsilon - 1)\tau$. The bold arcs indicate the alternating path $\pi_1$.} 
\label{fig:braess_5}
\end{figure}

A Nash flow  $z = f^0$ is given by routing $1/m$ units of flow over the paths $(s,w_{m-1},t),(s,v_1,t)$ and the paths in $\{(s,v_j,w_{j-1},t) : 2 \leq j \leq m-1\}$. Note that all these paths have latency one, and the path $(s,v_j,w_j,t)$, for some $1 \leq m \leq j$, has latency $(1+\epsilon) + ((m-1)\epsilon - 1)\tau \geq 1$ for all $\tau \geq 0$ and $ \epsilon \geq 1/(m-1)$. We conclude that $C(z) = 1$.

An $\epsilon$-approximate Nash flow $x = f^\epsilon$ is given by routing $1/(m-1)$ units of flow over the paths in $\{(s,v_j,w_j,t) : 1 \leq j \leq m - 1\}$. Each such path $P$ then has a latency of 
$$
l_P(x) = (1+\epsilon) + ((m-1)\epsilon - 1)\tau + (m-j+j)\tau = (1+\epsilon)(1 + (m-1)\tau).
$$
Furthermore, we have $l_P(x) = (1+\epsilon)l_{P'}(x)$ for all $P' = (s,v_j,w_{j-1},t)$, where $2 \leq j \leq m-1$. The same argument holds for the paths $(s,w_{m-1},t)$ and $(s,v_1,t)$, so $x$ is indeed an $\epsilon$-approximate Nash flow. We have $C(x) = (1+\epsilon)(1 + (m-1)\tau)$. Sending $\tau \rightarrow \infty$ then gives the desired result. \qed
\end{proof}

\begin{proof}[Lower bound for $\epsilon < 1/(\lfloor n/2 \rfloor - 1)$]
Let $y_m : \R_{\geq 0} \rightarrow \R_{\geq 0}$ be a non-decreasing, continuous function with $y_m(1/m) = 0$ and $y_m(1/(m-1)) = \epsilon/(1 - \epsilon\cdot(m-1))$.
We define 
$$
l^{m}_a(g) = \left\{ \begin{array}{ll}
(m - j)\cdot y_m(g) & \text{ for } a \in \{(s,v_j) : 1 \leq j \leq m-1\}\\
j \cdot y_m(g) & \text{ for } a \in \{(w_j,t) : 1 \leq j \leq m-1\}\\
1 & \text{ for } a \in \{(v_i,w_{i-1}): 2 \leq i \leq m-1\} \cup E_3\\
1 & \text{ for } a \in \{(v_i,w_i): 1 \leq i \leq m-1\} 
\end{array}\right.
$$
Note that $l_a^m$ is non-negative for all $a \in A$.  

A Nash flow  $z = f^0$ is given by routing $1/m$ units of flow over the paths $(s,w_{m-1},t),(s,v_1,t)$ and the paths in $\{(s,v_j,w_{j-1},t) : 2 \leq j \leq m-1\}$. Note that all these paths have latency one, and the path $(s,v_j,w_j,t)$, for $1 \leq m \leq j$, has latency one as well. We conclude that $C(z) = 1$.

An $\epsilon$-approximate Nash flow $x = f^\epsilon$ is given by routing $1/(m-1)$ units of flow over the paths in $\{(s,v_j,w_j,t) : 1 \leq j \leq m - 1\}$. Each such path $P$ then has a latency of 
$$
l_P(x) = 1 + (m - j + j) \cdot \frac{\epsilon}{1 - \epsilon\cdot (m-1)} = \frac{1+\epsilon}{1 - \epsilon\cdot (m-1)}.
$$
Furthermore, for all $P' = (s,v_j,w_{j-1},t)$, where $2 \leq j \leq m-1$, we have
$$
(1+\epsilon)l_{P'}(x) = (1+\epsilon)\left(1 +  (m - 1) \cdot \frac{\epsilon}{1 - \epsilon\cdot (m-1)}\right) = l_P(x)
$$
The same argument holds for the paths $(s,w_{m-1},t)$ and $(s,v_1,t)$, so $x$ is indeed an $\epsilon$-approximate Nash flow. The result now follows since $C(x)/C(z) = (1+\epsilon)/(1 - \epsilon\cdot (m-1))$ as desired. \qed
\end{proof}


\subsection{Unboundedness of the $\epsilon$-stability ratio for matroid congestion games.}
We first show that the ratio between a $\beta$-approximate Nash flow and an original Nash flow can, in general, be unbounded.

\begin{theorem}\label{thm:matroid_unbounded}
Let $\epsilon \geq 0$ fixed and let $\mathcal{J} = (E,(l_e)_{e \in E},\mathcal{S})$ be a single-commodity matroid congestion game in which $\mathcal{S}$ contains all subsets of $E$ of precisely size $k$. Then
$$
\sup_{\mathcal{J}} \eSR(\mathcal{J})  \ \geq \ \left\{ \begin{array}{ll}
\frac{1 + \epsilon}{1 - \epsilon (k-1)} & \text{ \ \ \  if } \epsilon < 1/(k - 1) \\
\infty & \text{ \ \ \  if } \epsilon \geq  1/(k - 1),
\end{array}\right.
$$
For $k = 2$, the resulting lower bound $(1+\epsilon)/(1 - \epsilon)$ for $\epsilon < 1$ is tight (i.e., it is also an upper bound). This settles the case of $k=2$ completely.
\end{theorem}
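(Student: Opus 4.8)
The plan is to prove the lower bound by an explicit family of instances on the rank-$k$ uniform matroid with $k+1$ ground elements, and to establish the matching upper bound for $k=2$ by a matroid analogue of the alternating‑path argument behind Theorem~\ref{thm:approx_single}. The unboundedness for $\epsilon\ge 1/(k-1)$ will come out of the very same construction.

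\medskip\noindent\emph{Lower bound.} Take $E=\{e_0,e_1,\dots,e_k\}$, so that the strategies are the bases $B_i=E\setminus\{e_i\}$, $i=0,\dots,k$. Let the Nash flow $z$ route $1/k$ units over each of $B_1,\dots,B_k$, whence $z_{e_0}=1$ and $z_{e_i}=(k-1)/k$ for $i\ge 1$. Choose $l_{e_0}\equiv 1/k$, and for $i\ge 1$ let $l_{e_i}$ be constant $1/k$ on $[0,(k-1)/k]$ and then reach some value $L\ge 1/k$ at load $1$. One checks directly that \emph{every} basis has latency $1$ under $z$, so $z$ is a Nash flow with $C(z)=1$. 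Now let $x$ route all demand over $B_0$, so $x_{e_i}=1$ for $i\ge1$, $x_{e_0}=0$, $C(x)=kL$, $l_{B_0}(x)=kL$, while every other basis $B_j$ has latency $1/k+(k-1)L<kL$ under $x$. Hence $x$ is $\epsilon$-approximate exactly when $kL\le(1+\epsilon)(1/k+(k-1)L)$, i.e.\ $L(1-(k-1)\epsilon)\le(1+\epsilon)/k$. For $\epsilon<1/(k-1)$ this is tightest at $L=\tfrac{1+\epsilon}{k(1-(k-1)\epsilon)}$, giving $C(x)/C(z)=\tfrac{1+\epsilon}{1-(k-1)\epsilon}$; for $\epsilon\ge 1/(k-1)$ the left side is non-positive, so $L$ and hence $C(x)/C(z)=kL$ may be taken arbitrarily large. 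Specializing $k=2$ yields both the bound $(1+\epsilon)/(1-\epsilon)$ for $\epsilon<1$ and unboundedness for $\epsilon\ge 1$.

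\medskip\noindent\emph{Upper bound for $k=2$.} Let $x$ be an $\epsilon$-approximate and $z$ a Nash flow, demand normalized to $1$. Since $z$ is a Nash flow, all its flow-carrying bases share one latency $\lambda$, so $C(z)=\lambda$; letting $\mu$ be the maximal latency of an $x$-flow-carrying basis we get $C(x)\le\mu$, and by $\epsilon$-approximateness $\mu\le(1+\epsilon)\,l_B(x)$ for \emph{every} basis $B$ (every $2$-subset is a strategy). It therefore suffices to exhibit one basis $B=\{c,d\}$ with $l_c(x_c)+l_d(x_d)\le\lambda/(1-\epsilon)$, since then $C(x)\le\mu\le\tfrac{1+\epsilon}{1-\epsilon}\lambda=\tfrac{1+\epsilon}{1-\epsilon}C(z)$. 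To locate such a pair, partition $E=X\cup Z$ with $Z=\{e:z_e\ge x_e,\ z_e>0\}$ as in Definition~\ref{def:alt_path}. Flow conservation $\sum_e x_e=\sum_e z_e=2$ rules out $\{e:z_e>0\}\subseteq X$, so some element $c$ with $z_c>0$ and $x_c\le z_c$ lies in a flow-carrying basis of $z$; pairing $c$ suitably—using the Nash optimality of $z$ (every basis has latency $\ge\lambda$ under $z$) together with the $\epsilon$-approximate inequalities \emph{among} the flow-carrying bases of $x$ (which limit how spread out the $x$-latencies of the cheap elements can be)—produces a basis with the required latency under $x$, along the lines of inequality~(\ref{eq:z}) in the proof of Theorem~\ref{thm:approx_single}.

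\medskip The hard part is precisely this last step. Because the strategy set of a uniform matroid consists of \emph{all} $k$-subsets, the network alternating path of Nikolova–Stier-Moses no longer exists, and one has to replace it by an exchange sequence between a worst $x$-basis and a good $z$-basis while tracking the accumulated $\epsilon$-slack; summing the exchange inequalities gives $\mu\le\tfrac{1+\epsilon}{1-(r-1)\epsilon}\,l_{B_z}(x)$ where $r$ is the number of exchanges, and one must also bound $l_{B_z}(x)$ in terms of $C(z)$. For $k=2$ the sequence is a single exchange and the bound $(1+\epsilon)/(1-\epsilon)$ follows cleanly; for $k\ge 3$ the slack can accumulate over up to $k-1$ exchanges and it is unclear whether it can be controlled tightly on uniform matroids—hence only $k=2$ is fully settled.
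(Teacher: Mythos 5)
Your lower-bound construction is correct and is essentially the paper's own: the paper takes $l_{e_0}\equiv 1$, $l_{e_i}((k-1)/k)=1$ and $l_{e_i}(1)=M$ so that $C(z)=k$ and the ratio is $M$; yours is the same instance rescaled by $1/k$, and it correctly yields both the finite bound for $\epsilon<1/(k-1)$ and unboundedness otherwise.

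The upper bound for $k=2$ is where your argument has a genuine gap. You reduce everything to exhibiting a basis $B=\{c,d\}$ with $l_c(x_c)+l_d(x_d)\le C(z)/(1-\epsilon)$, correctly identify this as ``the hard part'', and then assert that for $k=2$ it ``follows cleanly'' from a single exchange. It does not. A single exchange inequality $l_{P_{\max}}(x)\le(1+\epsilon)\,l_{B'}(x)$, where $B'$ replaces one element of $P_{\max}=\{a,b\}$ by some $e_0\in Z$, still leaves an uncontrolled term $l_a(x_a)$ (an $X$-element whose $x$-latency can be arbitrarily large) on the right-hand side; and your target inequality $l_{B'}(x)\le C(z)/(1-\epsilon)$ is itself tight in the worst case (in your own $k=2$ lower-bound instance one has $l_{\{e_0,e_1\}}(x)=1/(1-\epsilon)=C(z)/(1-\epsilon)$ and no basis does better), so it cannot follow from the Nash conditions for $z$ alone --- it already encodes the full strength of the theorem. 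The paper closes this gap with two ingredients you do not supply: (i) structural reductions showing one may assume $|Z|=1$ (using the Braess-immunity of matroids due to Fujishige et al.\ to flatten latencies on $Z$, which otherwise yields the weaker bound $1+\epsilon$) and that $P_{\max}$ consists only of $X$-elements; and (ii) the key step of writing \emph{two} exchange inequalities, comparing $\{a,b\}$ with both $\{a,e_0\}$ and $\{b,e_0\}$, and \emph{adding} them so that $l_a(x_a)+l_b(x_b)$ appears on both sides and cancels down to a factor $(1-\epsilon)$, leaving only $2\,l_{e_0}(x_{e_0})\le l_{e_0}(z_{e_0})+l_e(z_e)=C(z)$ on the right. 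Without this averaging over exchanges (or an equivalent device) the $k=2$ upper bound is not established.
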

\begin{proof}[Unboundedness]
Fix some $k \in \N$ and consider an instance with demand $r = 1$ and $E = \{e_0,\dots,e_k\}$, and, as stated above, let $\mathcal{S}$ be the family of all subsets of size $k$. Let $l_0(y) = 1$ be the constant latency function of resource $e_0$ and for $e \in \{e_1,\dots,e_k\}$, let $l_e(y)$ be a non-negative, continuous, increasing function satisfying 
$$
l_e((k-1)/k) = 1 \ \ \text{ and } \ \ l_e(1) = M,
$$ 
where $M$ is a constant to be determined later. 

A classical Nash flow $z = f^0$ is given by assigning $1/k$ units of flow to the strategies $E \setminus \{e_j\}$ for $j = 1,\dots,k$, so that the load on $e_0$ is $z_0 = 1$, and the load on resources $e_j$ is $x_{j} = (k-1)/k$. Every strategy has latency $k$ (since $l_e((k-1)/k) = 1$ for all resources $j \in \{1,\dots,k\}$), and therefore also $C(z) = k$.

We construct an $\epsilon$-approximate Nash flow $x = f^\epsilon$ as follows.The flow $x$ is defined by assigning all flow to the strategy $s = \{e_1,\dots,e_k\}$. For suitable choices of $M$, we show that $x$ is then indeed an $\epsilon$-approximate Nash flow. The set of other strategies is given by  $s^j = E \setminus \{e_j\}$ for $j \in \{1,\dots,k\}$. The Nash conditions for $x$ are then equivalent to
\begin{equation}\label{eq:nash_matroid}
l_s(x) = k\cdot M \leq (1 + \epsilon)[1 + (k-1)M] = l_{s_j}(x)
\end{equation}
for all $j \in \{1,\dots,k\}$. If $\epsilon \geq 1/(k-1)$, then $1 \leq \epsilon (k-1)$, and thus
$$
k \cdot M \leq 1 + \epsilon + (k-1)M + M \leq 1 + \epsilon + (k-1)M + \epsilon (k-1)M = (1+\epsilon)(1 + (k-1)M)
$$
for any non-negative $M$, i.e., the Nash conditions are always satisfied. Note that $C(x) = kM$ so that $C(x)/C(z) = M \rightarrow \infty$ as $M \rightarrow \infty$. Therefore the $\epsilon$-ratio is unbounded for $\epsilon \geq 1/(k-1)$.

If $\epsilon < 1/(k-1)$, then the Nash condition in (\ref{eq:nash_matroid}) is equivalent to 
$$
M \leq \frac{1 + \epsilon}{1 - \epsilon(k-1)}
$$
which is strictly positive since $\epsilon < 1/(k-1)$. In particular, by choosing $M = (1 + \epsilon)/(1 - \epsilon(k-1))$, we get $C(x)/C(z) = (k\cdot M)/k = (1 + \epsilon)/(1 - \epsilon(k-1))$ for $\epsilon < 1/(k-1)$. \qed
\end{proof}

For the case $k = 2$, the construction in the previous proof yields a lower bound of $(1+\epsilon)/(1 - \epsilon)$ if $\epsilon< 1$, and $\infty$ otherwise. For $\epsilon < 1$, we show that the bound $(1+\epsilon)/(1-\epsilon)$ is also an upper bound.

\begin{proof}[Upper bound for $k = 2$] 
Let $z = f^0$ be a classical Nash flow and let $x = f^\epsilon$ be a worst-case $\epsilon$-approximate Nash flow (both with normalized demand $r = 1$). 
We partition $E = X \cup Z$, where $Z = \{a \in E : z_a \geq x_a \text{ and } z_a > 0\}$ and $X = \{a \in E : z_a < x_a \text{ or } z_a = x_a = 0\}$.  

Let us first elaborate on the \emph{structure} of the flow $z$. One of the following cases holds.
\begin{enumerate}[i)]
\item There are two resources $e$ and $e'$ which are used by every player, that is, all flow is assigned to the strategy $\{e,e'\}$. By definition, both these resources will then be part of $Z$ (since $x_a \leq 1 = z_a$ for $a = e,e'$).
\item There is precisely one resource $e_0$ used by every player. Again, this resource will then be part of $Z$ by definition. Moreover, we then also have 
$$
l_{e_0}(1) = l_{e_0}(z_e) \leq l_{e'}(z_{e'})
$$ 
for every other resource $e'$. Also, all other resources $e'$ with $z_{e'} > 0$ have equal latency (which is true because of base exchange arguments), which in turn is greater or equal than that of $e_0$.
\item There are no resources which are used by every player. Then all flow-carrying resources in $z$ have equal latency (again because of base exchange arguments).
\end{enumerate}

\medskip

\noindent \textbf{Claim 1:} Without loss of generality, there is at most one resource in $Z$.
\begin{proof}
If there are at least two resources in $Z$, then there exists a pair of resources in $Z$, forming a strategy with a strictly positive amount of flow assigned to it in $z$ (by what was said above regarding the structure of $z$).\footnote{We emphasize that this is not necessarily true for all combinations of resources in $Z$.}
Using the fact that matroids are immune against the Braess paradox, as shown by Fujishige et al. \cite{Fujishige2015}, we can then carry out similar arguments as in the proof of Theorem \ref{thm:hetero} for the $\beta$-deviation ratio, but then for the more simple homogeneous case.\footnote{Phrased differently, the proof of Theorem \ref{thm:hetero} for the $\beta$-deviation ratio essentially relies on the fact that for the flow $z$ we can without loss of generality assume that all latency functions of resources in $Z$ are constant (by using immunity against the Braess paradox), and that there exists a strategy, with positive amount of flow assigned to it in $z$, only using resources in $Z$.} This would result in a bound of $1 + \epsilon \leq (1+\epsilon)/(1 - \epsilon)$. \qed
\end{proof}

 Let $P_{\max}$ be a strategy maximizing $l_P(x)$ over all flow-carrying strategies $P$ in $x$. \medskip

\noindent \textbf{Claim 2:} Without loss of generality, $P_{\max}$ uses only resources in $X$ (and not the resource $e_0$ in $Z$). 

\begin{proof} Suppose that every strategy with maximum latency is of the form $\{f,e_0\}$ for some resource $f \in E \setminus \{e_0\}$. Let $\{f_1,\dots,f_q\}$ be the set of all such resources $f$. It follows that, for any fixed $j = 1,\dots,q$, the strategy $\{e,f_j\}$, for some $e \in E \setminus \{e_0\}$, cannot be flow-carrying, otherwise,
\begin{eqnarray}
l_{e_0}(x_{e_0}) &\leq& l_{e_0}(z_{e_0})  \ \ \ \ \text{(definition of $Z$)} \nonumber \\
&\leq&  l_{e}(z_{e})   \ \ \ \ \ \ \text{(structure of Nash flow $z$ discussed above)} \nonumber \\
&\leq&    l_e(x_e)  \ \ \ \ \ \ \text{(definition of $X$)} \nonumber
\end{eqnarray}
which implies that $l_{e_0}(x_{e_0}) + l_{f_j}(x_{f_j}) \leq  l_e(x_e) + l_{f_j}(x_{f_j})$. Since $\{e_0,f_j\}$ is a strategy of maximum latency, then also $\{e,f_j\}$ is a strategy of maximum latency, which contradicts our assumption.

Moreover, not all players in $x$ can use resource $e_0$, otherwise $1 = x_{e_0} \leq z_{e_0}$, meaning that also all players use resource $e_0$ in $z$. But since $x_e > z_e$ for all other resources, which are in $X$ since we have only one resource in $Z$, this leads to a contradiction, since the total flow on all edges in $x$ is then higher than the total flow on all edges in $z$. 

Now, let $\{a,b\}$ be a flow-carrying strategy for some $a,b \in E\setminus \{e_0\}$, which exists by what was said above. For any flow-carrying strategy of the form $\{a,e\}$ for some $e \in E$, we have
$$
l_a(x_a) + l_e(x_e) \leq l_{P_{\max}}(x) \leq (1+\beta)l_P(x)
$$
for any other strategy $P$. We can then raise the value of $l_a(x_a)$ until the latency of some strategy of the form $\{a,e\}$ becomes tight with respect to $l_{P_{\max}}(x)$. Note that, since $a \in X$, we have $x_a > z_a$ and therefore this does not contradict the fact that $l_a$ is non-decreasing. More precisely, we replace $l_a$ by some function $\hat{l}_a(y)$ which is non-negative, continuous and non-decreasing, and that satisfies $\hat{l}_a(z_a) = l_a(z_a)$ and $\hat{l}_a(x_a) = l_a(x_a) + \alpha$, where $\alpha$ is the smallest value such that $l_a(x_a) + \alpha + l_e(x_e) = l_{P_{\max}}(x)$ for some flow-carrying resource of the form $\{e,a\}$. If $a \neq e_0$, we have found the desired result, and otherwise, we can use a similar argument as in the beginning of the claim to show that $\{a,b\}$ is also a flow-carrying strategy with maximum latency. Note that the social cost $C(x)$ of $x$ can only get worse if we replace the function $l_a$ by $\hat{l}_a$.\qed
\end{proof}  

We now use Claims 1 and 2 to establish the bound $(1+\epsilon)/(1-\epsilon)$ for $\epsilon < 1$. Let $P_{\max} = \{a,b\}$ for $a,b \in E \setminus \{e_0\}$ (which is justified because of Claim 2). The Nash conditions then give (by comparing $\{a,b\}$ with $\{a,e_0\}$ and $\{b,e_0\}$)
$$
l_a(x_a) + l_b(x_b) \leq (1+\epsilon)[l_a(x_a) + l_{e_0}(x_{e_0})]
$$
$$
l_a(x_a) + l_b(x_b) \leq (1+\epsilon)[l_b(x_b) + l_{e_0}(x_{e_0})]
$$
Adding up these inequalities, and rewriting, gives
\begin{equation}\label{eq:nash_matroid2}
(1 -\epsilon)[l_a(x_a) + l_b(x_b)] \leq (1+\epsilon)[l_{e_0}(x_{e_0})+ l_{e_0}(x_{e_0})]
\end{equation}
By definition of $Z$, we have $l_{e_0}(x_{e_0}) \leq l_{e_0}(z_{e_0})$, and, moreover, by the structure of the flow $z$ (as discussed in the beginning of the proof), we have $l_{e_0}(z_{e_0}) \leq l_{e}(z_{e})$ for any other resource with $z_e > 0$. On top of that, it also holds that $l_{e_0}(z_{e_0}) + l_{e}(z_{e}) = C(z)$, for any $e$ with $z_e > 0$, because of the fact that the demand is normalized to $r = 1$. Combining all this implies that $l_{e_0}(x_{e_0})+ l_{e_0}(x_{e_0}) \leq C(z)$. Using this observation, combined with (\ref{eq:nash_matroid2}), and the fact that $\epsilon < 1$, it follows that
$$
C(x) \leq l_a(x_a) + l_b(x_b) \leq \frac{1 + \epsilon}{1 - \epsilon} \cdot [l_{e_0}(x_{e_0})+ l_{e_0}(x_{e_0})] \leq \frac{1 + \epsilon}{1 - \epsilon} \cdot  C(z)
$$
which is the desired result. \qed
\end{proof}

\subsection{Proof of Theorem \ref{thm:matroid}}
\begin{rtheorem}{Theorem}{\ref{thm:matroid}}
Let $\mathcal{J} = (E,(l_e)_{e \in E},(\mathcal{S}_i)_{i \in [k]},(r_i)_{i \in [k]})$ be a matroid congestion game with homogeneous players. Let $\beta \ge 0$ be fixed and consider $\beta$-bounded basis deviations as defined above. 
Then the $\beta$-\deviation ratio is upper bounded by $\bDR(\mathcal{J}) \le 1 + \beta$. Further, this bound is tight already for $1$-uniform matroid congestion games. 

\end{rtheorem} 
\begin{proof}
The proof consists of establishing the following two claims.
\begin{enumerate}[i)]
\item We have  $l_e(x_e) \leq (1+\beta)l_e(z_e)$ for all $e \in E$ with $x_e > z_e$.
\item It holds that
$$
\sum_{e : x_e > z_e} (x_e - z_e)l_e(x_e) \leq (1 + \beta) \sum_{e : z_e \geq x_e} (z_e - x_e)l_e(x_e).
$$
\end{enumerate}
From these two claims, the result can be derived as follows. Rewriting the inequality in $ii)$, we find
$$
\sum_{e : x_e > z_e} x_el_e(x_e) \leq \sum_{e : x_e > z_e} z_el_e(x_e) + (1 + \beta) \sum_{e : z_e \geq x_e} (z_e - x_e)l_e(x_e).
$$
Adding $\sum_{e : z_e \geq x_e} x_el_e(x_e)$ to both sides of the inequality, and using the definition of $C(x)$, implies
$$
C(x)  \leq \sum_{e : x_e > z_e} z_el_e(x_e) + (1 + \beta)\sum_{e : z_e \geq x_e} z_e l_e(x_e) - \beta \sum_{e : z_e \geq x_e} x_el_e(x_e)
$$
In the first term, we now use that the fact that $l_e(x_e) \leq (1+\beta) l_e(z_e)$ for every $e$ in this summation, because of $i)$; in the second term, we use the fact that $l_e(x_e) \leq l_e(z_e)$ for all resources $e$ in the summation (which follows from $z_e \geq x_e$ and the fact that the latency functions are non-decreasing); the third term is left out using the fact that $\beta \sum_{e : z_e \geq x_e} x_el_e(x_e) \geq 0$ because of the non-negativity of the latency functions and $\beta$. This then implies that
$$
C(x) \leq  (1+\beta) \sum_{e : x_e > z_e} z_el_e(z_e) + (1 + \beta)\sum_{e : z_e \geq x_e} z_e l_e(z_e) - 0 = (1+\beta)C(z)
$$
which gives the desired result.

It now remains to show that $i)$ and $ii)$ holds. We first prove $i)$, of which the proof is of a similar nature as a proof of Fujishige et al. \cite{Fujishige2015}. In particular, using similar arguments as Lemma 3.2 \cite{Fujishige2015}, it can be shown that for every $e \in E$ with $x_e > z_e$, there exists an $f \in E \setminus \{e\}$ with $z_f > x_f$ such that 
$$
l_e(x_e) + \delta_e(x_e) \leq l_f(x_f) + \delta(x_f) \text{ \ \ \ and \ \ \ } l_f(z_f) \leq l_e(z_e).\footnote{We refer the reader to the proof of Lemma 3.2 \cite{Fujishige2015} for details. We omit it here for notational reasons.}
$$
The proof of this argument uses the fact that $x$ is a Nash flow w.r.t. the costs $l_e + \delta_e$, and $z$ a Nash flows w.r.t. to the latencies $l_e$.
It follows that
$$
l_e(x_e) \leq l_e(x_e) + \delta_e(x_e) \leq l_f(x_f) + \delta(x_f)  \leq (1+\beta)l_f(x_f) \leq (1 + \beta)l_f(z_f) \leq (1+\beta)l_e(z_e)
$$
using the fact that $0 \leq \delta_e(x_e) \leq \beta l_e(x_e)$ and the properties of the resources $e$ and $f$.

We now prove the second claim. We use a `variational inequality' argument similar to, e.g., the proof of Theorem 4 \cite{Lianeas2016}. Because of the fact that $x$ is $\beta$-deviated Nash flow, it follows that
$$
\sum_{e \in E} x_e(l_e(x_e)+ \delta_e(x_e)) \leq \sum_{e \in E} z_e(l_e(x_e) + \delta_e(x_e)).
$$
Rewriting gives
$$
\sum_{e: x_e > z_e} (x_e - z_e)(l_e(x_e)+ \delta_e(x_e)) \leq \sum_{e: z_e \geq x_e} (z_e - x_e)(l_e(x_e)+ \delta_e(x_e)).
$$
Using the fact that $0 \leq \delta_e(x_e) \leq \beta l_e(x_e)$ it follows that
$$
\sum_{e: x_e > z_e} (x_e - z_e)l_e(x_e) \leq \sum_{e: z_e \geq x_e} (z_e - x_e)(l_e(x_e)+ \delta_e(x_e)) \leq (1+\beta)\sum_{e: z_e \geq x_e} (z_e - x_e)l_e(x_e)
$$
which completes the proof.

Tightness follows, e.g., from a similar construction as in the last tightness construction of the proof of Theorem \ref{thm:hetero} (if applied to a homogeneous population). \qed
\end{proof}

\end{document}